\let\classAND\AND
\let\AND\relax
\let\AND\classAND
\long\def\@maketablecaption#1#2{\@tablecaptionsize
    \global \@minipagefalse
    \hbox to \hsize{\parbox[t]{\hsize}{\centering #1 \\ #2}}}
\newtheorem{definition}{Definition}
\newtheorem{theorem}{Theorem}
\newtheorem{remark}{Remark}
\newtheorem{proposition}{Proposition}
\newtheorem{assumption}{Assumption}
\DeclareMathOperator{\rank}{rank}
\DeclareMathOperator{\tr}{tr}
\begin{document}

\begin{frontmatter}

\title{Optimal Unpredictable Control for Linear Systems \thanksref{footnoteinfo}} 

\thanks[footnoteinfo]{
Preliminary results have been presented at the 2020 American Control Conference (ACC) \cite{li2020unpredictable}.}

\author{Chendi Qu}\ead{qucd21@sjtu.edu.cn},
\author{Jianping He}\ead{jphe@sjtu.edu.cn},    
\author{Jialun Li}\ead{jialunli@sjtu.edu.cn},   
\author{and Xiaoming Duan}\ead{xduan@sjtu.edu.cn} 
\address{Department of Automation, Shanghai Jiao Tong University, China}  

\begin{keyword}                           
CPS Security; Unpredictable Control; Optimal Distribution; Prediction Algorithms
\end{keyword}  

\begin{abstract}                          
In this paper, we investigate how to achieve the unpredictability against malicious inferences for linear systems. The key idea is to add stochastic control inputs, named as \emph{unpredictable control}, to make the outputs irregular. The future outputs thus become unpredictable and the performance of inferences is degraded. {The major challenges lie in: i) how to formulate optimization problems to obtain an optimal distribution of stochastic input, under unknown prediction accuracy of the adversary;} and ii) how to achieve the trade-off between the unpredictability and control performance.
We first utilize both variance and confidence probability of prediction error to quantify unpredictability, then formulate two two-stage stochastic optimization problems, respectively. Under variance metric, the analytic optimal distribution of control input is provided. With probability metric, it is a non-convex optimization problem, thus we present a novel numerical method and convert the problem into a solvable linear optimization problem. Last, we quantify the control performance under unpredictable control, and accordingly design the unpredictable LQR and cooperative control. {Simulations demonstrate the unpredictability of our control algorithm. The obtained optimal distribution outperforms Gaussian and Laplace distributions commonly used in differential privacy under proposed metrics.}
\end{abstract}

\end{frontmatter}

\setlength{\abovedisplayskip}{2pt} 
\setlength{\belowdisplayskip}{2pt}
\setlength{\parskip}{4pt}

\section{Introduction}
With the development of computation and communication, Cyber-Physical Systems (CPSs), such as mobile robots and smart grids, are promising to improve our life. However, these systems are prone to suffer from data leakage due to cyber and physical accessibility. When malicious agents obtain states or dynamics of these systems, they are able to infer the private data and design attacks \cite{han2018privacy, le2013differentially, mo2016privacy, he2018preserving}. As a result, data privacy and security become prominent issues in CPSs.

This paper aims to achieve an unpredictability of future outputs (or concerned states) of a CPS with linear dynamics, i.e., making an adversary hard to predict outputs of the system precisely. Inspired by recent works on protecting privacy by adding noise \cite{han2018privacy}, we add a stochastic term in the control input, named as an unpredictable control, to maximize the unpredictability. We propose variance and probability metrics to quantify unpredictability, aiming to find the optimal noises design for the systems.

\subsection{Motivations and Challenges}
{The disclosure of states is sensitive to a CPS. If adversaries predict the future outputs accurately, they are able to design severe attacks. For example, consider a mobile robot moving outdoor to monitor the surrounding environment. If the future position of the robot is known by an attacker, the attacker is able to intercept the robot precisely \cite{manyam2019optimal, qu2022moving, li2019learning}. Therefore, preserving privacy of states is critical. Since a CPS evolves with its dynamic equations, attackers are able to predict the future state with past observed trajectory. A common approach to protect the states from malicious inference is adding random disturbances \cite{le2013differentially,han2018privacy}, which increases the variance of inference error and decreases the probability of adversary making accurate prediction.} The first question we need to consider is \emph{how to quantify the unpredictability}? To address this, we propose two metrics, including variance of the prediction error (expectation of the square of the prediction error) and the probability of precise prediction given a certain range. 
{Besides, we are interested in \emph{what is the optimal input distribution to protect unpredictability} in the sense of these metrics?} 

The state unpredictability problem is similar to traditional anti-predator behaviors in biology and classical pursuit-evasion games, but novel and more challenging. First, researches on anti-predator behaviors emphasize on explaining the escape mechanism according to defined metrics and mechanistic \cite{roberts2004positional}. In \cite{herbert2017escape}, an embedding matrix of history trajectories over a time window is established and the entropy of singular values of the embedding matrix is taken to evaluate path complexity. This 
metric is formed with history data and thus hard to be optimized directly. Therefore, it is difficult to be used to design the optimal future anti-predator behaviors. Second, in pursuit-evasion games, the interactions between pursuers and evaders are modeled as differential equations. Then, an optimal control problem is set up based on the deterministic model \cite{isaacs1999differential}\cite{mejia2019solutions}, which is simpler than our problem.

The challenges of the concerned problem are twofold. On the one hand, the measurement accuracy and prediction algorithms of potential attackers are unknown. The designed control method is expected to protect the future state against various prediction methods. On the other hand, with probability metric, it is a two-stage non-convex optimization problem. This problem is hard to be solved by existing solvers. 

\subsection{Key Idea and Contributions}

Inspired by recent works on adding random noises to preserve the secrecy of CPSs, we propose a method by adding an extra random input in original input to make the system states unpredictable. We use the variance of prediction errors and confidence probability to quantify unpredictability. The optimal distributions of the extra input are expected to maximize the unpredictability. 

To make our method applicable to different types of prediction algorithms by attackers, we formulated a max-min (min-max) optimization problem to optimize the worst case that attackers make accurate predictions. When utilizing the confidence probability metric as the objective function, the general problem is non-convex as an Optimal Uncertainty Quantification (OUQ) problem. To solve this problem, we first discretize the original problem in a high-dimensional spherical coordinate system and then convert the problem into a convex linear optimization problem, which can be solved by existing solvers easily. 

{The differences between this paper and our conference version \cite{li2020unpredictable} include i) definitions, problem formulations and main results of a single-integral model have been extended to general linear systems, ii) a novel numerical algorithm to achieve an optimal solution to the problem under probability metric is provided, iii) the newly arisen problem that how to calculate the extra unpredictable control input is solved, iv) the proposed secure control method is combined with LQR and cooperative control methods, v) extended simulations of the numerical algorithm are provided.} The main contributions are summarized as follows. 

\begin{itemize}
\item We propose variance and confidence probability metrics of the prediction error to quantify the unpredictability. Then we formulate the unpredictability preserving problem as two-stage optimization problems. The obtained unpredictable control generalizes well to various prediction algorithms of adversaries.

\item The analytic input distribution solution under variance metric is obtained. With probability metric, we present a novel numerical method and convert the problem into a solvable linear optimization problem. {The solved optimal distribution outperforms commonly used Gaussian and Laplace distributions under proposed metrics.} 

\item We quantify the control performance degradation caused by stochastic inputs. Taking LQR control and cooperative control as examples, we demonstrate how to achieve the optimal trade-off between output unpredictability and control performance.

\item {It is revealed that probability metric is better than variance metric to determine an optimal distribution. By optimizing the probability metric we obtain a specific form of the optimal distribution, while the variance metric only gives the relationship between unpredictability and covariance of input. Nevertheless, the covariance serves as a bridge to achieve a trade-off between security and control performance. This provides inspirations on which metric should be chosen in other privacy-preserving problems of CPS.}
\end{itemize}

\subsection{Related Works}
\subsubsection{Security and privacy of CPSs from the control perspective}
The security of CPSs includes three main attributes as confidentiality, integrity and availability (CIA). From the control-oriented perspective, researches in this area are divided into two aspects. One studies impacts of different attacks on control performance and aims to design detection and control mechanisms to guarantee the performance of the system \cite{teixeira2010cyber, liu2011false, pasqualetti2013attack, mo2009secure, irita2017detection, teixeira2015secure, zhao2019resilient}. This aspect is related to integrity and availability. Another investigates on disclosure of sensitive data in system and protection of sensitive information from unauthorized users, which aims to guarantee the confidentiality attribute. Related works quantify how accurate the attacker can infer the concerned states of the system. \cite{he2018preserving} presents a novel metric to analyze data privacy and studies privacy disclosure of initial states in consensus process by adding noise.{\cite{satchidanandan2016dynamic} proposes a ``watermarking" technique and injects it into the CPS as an excitation to reveal the malicious attack.} \cite{le2013differentially} studies protection of the true state trajectories from estimations with output data and proposes a differential-privacy-preserving filter by adding noise in published data. \cite{mo2016privacy} studies privacy of the initial states of the multi-agent system in consensus. The privacy is achieved by adding noise in an iterative process. 
These works protect confidentiality by adding noise to increase uncertainty of available data to make adversaries hard to infer private data accurately. 
{Differential privacy can also reveal how the state (private variable) is protected. \cite{geng2015optimal} considers a fundamental trade-off between data privacy and utility. \cite{duncan2000optimal} focuses on protecting the important data from tracker attacks through additive noises and derives an optimal disclosure limitation strategy.} 
There are also some privacy-preserving methods based on information-theoretic approaches \cite{nekouei2019information}.

Different from existing works, we study on quantifying privacy of future outputs (or concerned states) of linear dynamics under designed metrics, defined as output (states) unpredictability, instead of the initial or past sensitive states in the differential privacy studies. The motivation is to protect future data from malicious predictions by adversaries. We take a similar method by adding random control component in original control input, but give an optimal distribution of the unpredictable control component by solving a newly formulated  problem.

\subsubsection{Prediction of linear systems}
Predicting the states of dynamic system is a classic problem in various fields. In finance, researchers predict the market with spectrum and use various models like auto-regressive integrated moving average (ARIMA) model \cite{ariyo2014stock} to describe the phenomenon. In climatology, the weather is described by a series of complicated nonlinear systems. The future states of these systems are hard to be predicted accurately because the prediction errors increase sharply with initial observation errors \cite{delsole2004predictability,delsole2005predictability}. On the other hand, in traffic and autonomous driving, the future trajectories of traffic participants are expected to be predicted accurately. The neural networks, e.g., LSTM \cite{alahi2016social} and Graph Convolutional Networks \cite{shi2021sgcn,li2021spatio}, are promising to solve these problems. 
As for predicting general linear systems, the most famous method is Kalman filter. Kalman filter gives the optimal estimation or prediction of the system with minimum variance of estimation error. When the noise in the system is Gaussian, it is also optimal in probability metric, i.e. the state with maximum probability. However, when the distribution of noise is non-Gaussian, the classical Kalman filter may not be optimal with probability metric\cite{hanzon2001state}. 

In the problem we are concerned, to make the future states of the system unpredictable, the stochastic input can be designed and the optimal form is to be optimized, which may not be a Gaussian distribution. Hence, we first analyze the prediction error with both variance and probability metric and achieve optimal distribution in these two cases respectively. 

\subsection{Organization}
This paper is organized as follows. In Section \ref{secii}, some preliminaries are provided and the unpredictability problem is formulated as random optimization problems. In Section \ref{seciii}, optimization problems are solved and the optimal control is designed. Section \ref{seciv} illustrates how to combine our method with the existing control method. Section \ref{secv} shows the simulation results and Section \ref{secvi} concludes. 

\section{Preliminaries And Problem Formulation}\label{secii}

\subsection{Nominal Control and Prediction Models}\label{ii-a}
A discrete linear time-invariant system is described by  
\begin{equation}\label{eq1}
\begin{aligned}
x(k+1) &= Ax(k)+Bu(k),\\
y(k) &= Cx(k), k \in \mathbb{N}, 
\end{aligned}
\end{equation}
where $x(k) \in \mathbb{R}^{n}, u(k) \in \mathbb{R}^{p}$ and $y(k) \in \mathbb{R}^{m}$ denote the state, input and output, respectively.  
Then, we have 
\begin{equation*}
y(k+\tau) \!=\! Cx(k+\tau) \!=\! CA^{\tau}x(k) + \sum_{l=1}^{\tau}CA^{l-1}Bu(k+\tau-l). 
\end{equation*}

{\bf{Prediction-based Attack Model}}: {Suppose that there is an attacker, aiming to predict future outputs of system \eqref{eq1} by historical measurements of the outputs. 
Assume the attacker has knowledge of the system model with parameter matrices $A, B, C$. The attacker has access to the output $y(k)$ with potential measurement noises, and the system input $u(k)$ is unknown. }

Let $\varepsilon(k)$ be the error of the posterior estimate of $x(k)$, i.e., $\varepsilon(k)=x(k)-\hat x(k)$, which is relevant to prediction accuracy and subsequent unpredictable optimal control design. Suppose that the attacker can get an unbiased state estimate based on the output and system model, i.e., $\mathbb{E}\{\varepsilon(k)\}=0$. To facilitate the discussion, we divide the attacker's estimation into two situations.
\begin{itemize}
\item {\textbf{Condition 1} ($\bm{\mathrm{C}_1}$, $\varepsilon(k)\equiv0$): 
The state $x(k)$ is accurately estimated by the attacker, that is the posterior estimate $\hat x(k)=x(k)$.}
\item \textbf{Condition 2} ($\bm{\mathrm{C}_2}$, $\varepsilon(k)$ is a random vector): The expectation of $\varepsilon(k)$ equals to zero. This means that
the expectation of $\hat x(k)$ equals to $x(k)$. {In this condition,
$\varepsilon (k)$ and $u(k+\tau-1)$ are independent with each other for each $\tau \in \mathbb{N^{+}}$.
{Both the probability density function (PDF) and the variance of $\varepsilon$ are unknown}, where the variance is denoted by
$\mathbb{D}(\varepsilon)$. }
\end{itemize}

\begin{remark}
{Note that considering a precise estimation actually makes our unpredictability preserving more difficult, since the estimation variance will be the lowest when $\varepsilon(k)=0$ (see this from proof in Appendix \ref{app0}). For the plausibility of the above assumptions, we provide a possible case: When matrix $CB$ has full column rank and the attacker observes an accurate initial state $x_0$ (which is feasible for mobile agents since they usually keep stationary at the initial state), the accurate state estimation can be obtained.}
\end{remark}

{Suppose the attacker follows the linear prediction step based on known model dynamics as
\begin{equation}
\hat{y}(k+\tau|k) = CA^{\tau}\hat{x}(k|k) + \sum_{i=1}^{\tau}CA^{i-1}B\hat{u}(k+\tau-i|k),
\end{equation}
{where $\hat{y}(k+\tau|k)$ and $\hat{u}(k+\tau-i|k)$ are prior predictions of $y(k+\tau), {u}(k+\tau-i)$, respectively}, and $\hat{x}(k|k)$ is the posterior estimation. $\hat{u}(k+\tau-i|k)$ and $\hat{x}(k|k)$ are obtained by attacker according to the measurements before $k$-th time instant and we do not specify the exact estimation or prediction methods.} We omit the symbol for the current time instant $k$, e.g., denoting $\hat{u}(\cdot) = \hat{u}(\cdot|k)$ and $\hat{x}(k) = \hat{x}(k|k)$ for simplicity.

The prediction error of attacker is described by
\begin{equation}\label{eq10}
\varepsilon_{y}(k+\tau|k)=y(k+\tau)-\hat y(k+\tau|k).
\end{equation}
In this work, we take the case of $\tau=1$ as basis, i.e., consider the one step prediction error $\varepsilon_{y}(k+1|k)$.

\subsection{Unpredictable Control Model}
Our goal is to design control input such that the system output cannot be predicted accurately while maintaining control performance. To increase the prediction error, an extra control input $u_{e}(k)$ is added to the input $u(k)$. {Thus, 
the system state equation is changed to   
\begin{equation}\label{eq:system_dynamic}
\begin{aligned}
x(k+1) &= Ax(k)+B(u(k)+u_e(k)).
\end{aligned}
\end{equation}
Let $\varepsilon_y(k+1)= \varepsilon_y(k+1|k)$. Then, the one-step prediction error satisfies
\begin{equation}\label{predict_error}
\begin{aligned}
\varepsilon_y(k+1) &= Cx(k+1)- C(A\hat{x}(k)+B \hat{u}(k))\\&=  CA\varepsilon(k)+CB(u(k) - \hat{u}(k))+CB u_e(k)\\
&= CA\varepsilon(k) + B_1(u(k) - \hat{u}(k)) + \theta(k),
\end{aligned}
\end{equation}
where $B_1 = CB$ and  $\theta(k) = C B u_e(k)\in \mathbb{R}^m$.
For simple statement, we let 
\begin{equation}\label{extra_input}
\theta_e(k) = Bu_{e}(k),
\end{equation}
then we have  $\theta (k)=C \theta_e(k)$.}

{If $u_{e}(k)$ is a function related to time, the outputs are a series of data about time.
In this situation, it is not difficult to predict or regress the outputs by methods like ARIMA \cite{liu2016online} or RNN \cite{zhang1998time}. However, if $u_{e}(k)$ is chosen as a random vector sequence satisfying certain distribution, then outputs are also random and difficult to be predicted accurately based on history data.} In addition, existing results show that random sequences are better than chaotic sequences \cite{hilborn2000chaos}. Therefore, the randomness of $u_{e}(k)$ is leveraged to make the outputs unpredictable. 

Let the PDF of $u_{e}(k)$ be $f_u(z)$, which satisfies 
\begin{itemize}
    \item {$f_u(z)$ is symmetrical about each component of $u_{e}(k)$, i.e. for $\forall\,i=1, \cdots, p$, we have 
\begin{equation}
f_u(z_1,\cdots , z_i, \cdots, z_p) = f_u(z_1,\cdots, -z_i, \cdots, z_p). 
\end{equation}
It follows immediately that $u_e(k)$ is zero-mean.}
\item The distributions of components of $u_{e}(k)$ are independent of each other. 
\item The variance of each $u_{e,i}(k)$ is bounded, i.e.,
\begin{equation}\label{extra_input_variances}
\mathbb{D}(u_{e,i}(k))\leq\sigma_{u,i}^2,
\,i= 1, \cdots, p.
\end{equation}
\end{itemize}
{For the above assumptions, considering those most commonly used noise distributions in privacy preserving are symmetric, such as the Gaussian and Laplace distributions \cite{nozari2017differentially,kawano2020design}, the symmetry assumption here is reasonable.
The independence of the components in $u_e(k)$ facilities the subsequent analysis and is reasonable since we usually add noise to each component of $u(k)$ individually considering the mobile agent control network systems.}
{Let $f_{\theta}(z) = f_{\theta}(z_1, \cdots, z_m)$ be the PDF of 
$\theta(k)$. {Then,
$f_{\theta}(z)$ is symmetrical about the vector $z$, i.e., $f_{\theta}(z)=f_{\theta}(-z)$, and
\begin{equation}
f_{\theta}(z_1,\cdots, z_i, \cdots z_m) = f_{\theta}(z_1,\cdots, -z_i, \cdots, z_m). 
\end{equation}}

According to the third assumption, the covariance matrix of $\theta(k)$ is element-wise bounded. Let  $\Sigma \geq 0$ be the covariance matrix of $\theta(k)$, where
$\Sigma_{ij}=\mathbb{E}\left(\theta_i\theta_j\right)$. Define the least upper bound matrix of $\Sigma$ as $\overline{\Sigma}$, i.e. we have $\Sigma_{ij}\leq\overline{\Sigma}_{ij},\forall\,i,j = 1,2,\cdots, m$. The {greatest} lower bound of $\Sigma$ is defined as $\underline{\Sigma}$. Clearly, we have
\begin{equation}\label{theta_PDF_condition2}
\underline{\Sigma} \leq \Sigma \leq \overline{\Sigma}.
\end{equation}
From \eqref{extra_input} and \eqref{extra_input_variances},  $\underline{\Sigma}$ and $\overline{\Sigma}$ can be chosen as
\begin{equation}\label{sigma_theta}
\begin{aligned}
\underline{\Sigma}_{ij} &= -\sum_{l=1}^{p} (-b_{il}b_{jl})^{+}\sigma_{u,l}^{2},\\ \overline{\Sigma}_{ij} &= \sum_{l=1}^{p}(b_{il}b_{jl})^{+}\sigma_{u,l}^{2},
\end{aligned}
\end{equation}
where $(c)^{+}=\max\{c, 0\}, c\in \mathbb{R}$.

Considering unpredictable control, the objective is changed to finding an optimal distribution, $f^{*}_{\theta}(z)$, of $\theta$ such that the prediction error is maximized.}

\subsection{Problem of Interest}
Note that the norms of $\varepsilon_{y}(k+1)$ cannot be optimized directly due to its randomness. Thus, we introduce the variance of the predict error $\mathbb{E}(\|\varepsilon_{y}\|_2^{2})$ and confidence probability metric $\mathbb{P}(\|\varepsilon_{y}\|_{2}^{2}\leq \alpha^{2})$ (the probability of a given accuracy prediction), respectively, to quantify the output unpredictability of a system. {Comparing to the traditional metric of differential privacy, the proposed variance and confidence probability metrics provide more exact unpredictability degree of the noise-adding mechanism in the face of state prediction, especially when the prediction accuracy is concerned.}

Then, we determine the optimal $f_{\theta}(z)$ by maximizing these two unpredictability metrics. $u(k)$ in \eqref{eq:system_dynamic} is assumed to be unknown to the attacker. We formulate the following two two-stage optimization problems. 
\begin{equation}\label{eq11}
\begin{aligned}
\bm{\mathrm{P_1}}:\qquad\max\limits_{f_\theta(z)}&\min\limits_{\hat u(k)}\mathbb{E}\,(\|\varepsilon_{y}(k+1)\|_2^{2})\\
\text{s.t.} \  &f_{\theta}(z) = f_{\theta}(-z),\, \underline{\Sigma}\leq \Sigma \leq \overline{\Sigma},\\
\end{aligned}
\end{equation}
\begin{equation}\label{eq11.2}
\begin{aligned}
\bm{\mathrm {P_2}}:\qquad\min\limits_{f_\theta(z)}&\max\limits_{\hat u(k)}\mathbb{P}\,(\|\varepsilon_{y}(k+1)\|_{2}^{2}\leq\alpha^{2})\\
\text{s.t.} \ &f_{\theta}(z) = f_{\theta}(-z),~\underline{\Sigma}\leq \Sigma \leq \overline{\Sigma}.\\
\end{aligned}
\end{equation}
where $\hat u(k)$ is the input prediction, $~\alpha\in\mathbb{R^{+}}$.
The first metric $\mathbb{E}(\|\varepsilon_{y}\|_2^{2})$ reflects the mean square of prediction error, i.e., the variance. The second metric $\mathbb{P}(\|\varepsilon_{y}\|_{2}^{2}\leq\alpha^{2})$ denotes the probability that the prediction accuracy satisfies the preset range, i.e., confidence probability. {The modeling of $\bm{\mathrm {P_1}}$ and $\bm{\mathrm {P_2}}$ can be viewed as optimizing the worst situations for the system. Inner problems optimize $\hat{u}(k)$ to evaluate the smallest $\mathbb{E}(\|\varepsilon_{y}\|_2^{2})$ 
and the largest $\mathbb{P}(\|\varepsilon_{y}\|_{2}^{2}\leq\alpha^{2})$ as the best prediction the adversary can achieve, then we make the prediction less reliable through the outer problems.}

The formulated problems $\bm{\mathrm {P_1}}$ and $\bm{\mathrm {P_2}}$ are hard to solve, since the objective functions are related to unknown posterior estimate error $\varepsilon(k)$. Therefore, we discuss the solutions of these two problems according to two situations of $\varepsilon(k)$ listed in Sec. \ref{ii-a}. Besides, $\bm{\mathrm {P_2}}$ is non-convex as an Optimal Uncertainty Quantification (OUQ) problem \cite{owhadi2013optimal, han2015convex}. Different from general OUQ problems, $\bm{\mathrm {P_2}}$ is two-stage and cannot be solved by the state-of-the-art algorithms of solving OUQ problems.

\section{Unpredictable Control Method}\label{seciii}
In this section, we give the optimal forms of $u_e$ and $\theta$ for $\bm{\mathrm {P_1}}$ and $\bm{\mathrm {P_2}}$ to make the outputs of system unpredictable. 
\subsection{Optimal Distribution of $\bm{\mathrm {P_1}}$}
Mathematically, we first give the definition of the optimality in terms of the attacker's prediction and distribution of $\theta$.
\begin{definition}
\emph{(Optimal input prediction)}
With variance metric, when $J_1=\mathbb{E}(\|\varepsilon_{y}(k+1)\|_2^{2})$, if $\forall
\hat u(k)\in \mathbb{R}^{p},$
$$J_1(f_\theta(z),\hat u(k),\hat x(k))\geq J_1(f_\theta(z),\hat u^{*}(k),\hat x(k)),$$
$\hat u^{*}(k)$ is an optimal input prediction with respect to $\hat x(k)$.
\end{definition}
\begin{definition}
\emph{(Optimal distribution)}
With variance metric, if arbitrary $f_\theta(z)$ satisfies
$$ J_1(f_\theta(z),\hat u^{*}(k),\hat x(k))\leq J_1(f^{*}_\theta(z),\hat u^{*}(k),\hat x(k)),$$ 
then $f^{*}_\theta(z)$ is an optimal distribution.
\end{definition}
We provide the following theorem as the solution to $\bm{\mathrm {P_1}}$.
\begin{theorem}\label{thm1}
{For both $\bm{\mathrm{C}_1}$ and $\bm{\mathrm{C}_2}$, $f_\theta(z)$ is an optimal distribution for $\bm{\mathrm {P_1}}$ iff
\begin{equation}
\Sigma_{ii} = \mathbb{D}(\theta_i)=\sigma_i^{2},\, i = 1,2, \cdots, m,
\end{equation}
where $\sigma_i^{2} = \overline\Sigma_{ii}$.}
\end{theorem}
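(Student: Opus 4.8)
The plan is to collapse the two-stage min–max into a transparent comparison of traces. First I would expand the squared prediction error in \eqref{predict_error} by writing $\varepsilon_y(k+1)$ as the sum of three pieces: the estimation-error term $CA\varepsilon(k)$, the input-mismatch term $B_1(u(k)-\hat u(k))$, and the injected-noise term $\theta(k)$. Expanding $\mathbb{E}(\|\cdot\|_2^2)$ yields three square terms and three cross terms, and I would argue that every cross term vanishes. The two cross terms containing $\theta(k)$ die because $\theta(k)$ is zero-mean (a direct consequence of the symmetry $f_\theta(z)=f_\theta(-z)$), and in the $CA\varepsilon(k)$–$\theta(k)$ product additionally because the freshly generated $u_e(k)$ is independent of $\varepsilon(k)$, so $\mathbb{E}[(CA\varepsilon(k))^\top\theta(k)]=\mathbb{E}(CA\varepsilon(k))^\top\mathbb{E}(\theta(k))=0$. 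The remaining cross term $2\,\mathbb{E}[(CA\varepsilon(k))^\top B_1(u(k)-\hat u(k))]$ vanishes since $\mathbb{E}\{\varepsilon(k)\}=0$ under both $\mathrm{C}_1$ and $\mathrm{C}_2$. Using $\mathbb{E}(\theta_i\theta_i)=\Sigma_{ii}$ this leaves
\[\mathbb{E}(\|\varepsilon_y(k+1)\|_2^2) = \mathbb{E}(\|CA\varepsilon(k)\|_2^2) + \|B_1(u(k)-\hat u(k))\|_2^2 + \tr(\Sigma),\]
where $\tr(\Sigma)=\sum_{i=1}^m \Sigma_{ii}=\sum_{i=1}^m \mathbb{D}(\theta_i)$.

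Next I would dispatch the two stages. Only the middle summand depends on $\hat u(k)$, and it is a nonnegative quadratic minimized at $\hat u^*(k)=u(k)$, where it equals zero regardless of the rank of $B_1$; this identifies the worst-case input prediction and reduces the inner value to $\mathbb{E}(\|CA\varepsilon(k)\|_2^2)+\tr(\Sigma)$. Since the first summand does not depend on $f_\theta(z)$, the outer maximization is equivalent to maximizing $\tr(\Sigma)$ over the feasible set. The element-wise constraint $\underline\Sigma\le\Sigma\le\overline\Sigma$ gives in particular $\Sigma_{ii}\le\overline\Sigma_{ii}=\sigma_i^2$ for every $i$, hence $\tr(\Sigma)\le\sum_{i=1}^m\sigma_i^2$, with equality if and only if $\Sigma_{ii}=\sigma_i^2$ for all $i$. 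This is exactly the claimed characterization and delivers both directions of the ``iff'': necessity, since any optimal $f_\theta$ must saturate every diagonal; and sufficiency, since any feasible $f_\theta$ saturating every diagonal attains the maximum, irrespective of its off-diagonal entries. The condition $\mathrm{C}_1$ is the specialization $\varepsilon(k)\equiv0$, which merely zeroes the constant $\mathbb{E}(\|CA\varepsilon(k)\|_2^2)$ and leaves the argument intact.

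The main obstacle is not the trace comparison but justifying the two modeling facts that make the cross terms vanish and the bound attainable. For the vanishing, I would state explicitly that the designed $u_e(k)$ is independent of the posterior estimation error $\varepsilon(k)$, extending the stated independence of $\varepsilon(k)$ and $u(k+\tau-1)$ to the injected term so that $\theta(k)$ decouples from $CA\varepsilon(k)$ in expectation. For attainability, I would use $\theta_i=\sum_{l}b_{il}u_{e,l}$ together with the independence of the components of $u_e(k)$ to get $\mathbb{D}(\theta_i)=\sum_{l}b_{il}^2\,\mathbb{D}(u_{e,l})\le\sum_{l}b_{il}^2\sigma_{u,l}^2=\overline\Sigma_{ii}$, with equality precisely when each $\mathbb{D}(u_{e,l})=\sigma_{u,l}^2$; any symmetric component law with that variance then realizes $\Sigma_{ii}=\sigma_i^2$, certifying that the feasible set is nonempty at the optimum and that the characterization is tight.
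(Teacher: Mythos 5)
Your proof is correct and follows essentially the same route as the paper's: both expand $\mathbb{E}(\|\varepsilon_y(k+1)\|_2^2)$ into orthogonal pieces (cross terms vanishing by zero mean and independence), identify $\hat u^*(k)=u(k)$ as the inner minimizer, and reduce the outer stage to maximizing $\tr(\Sigma)$ under the element-wise bound, which is saturated iff every diagonal entry equals $\overline\Sigma_{ii}$. The only differences are minor refinements on your part: you handle $\bm{\mathrm{C}_1}$ and $\bm{\mathrm{C}_2}$ in one unified three-term decomposition where the paper treats them sequentially, and you add an explicit attainability check via $\mathbb{D}(\theta_i)=\sum_{l}b_{il}^2\,\mathbb{D}(u_{e,l})$ that the paper leaves implicit.
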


\begin{proof}
The proof is given in Appendix \ref{app0}.
\end{proof}
\begin{remark}
Theorem \ref{thm1} indicates that the larger the variances are, the harder an attacker makes precise predictions. {This is consistent with our intuitions since larger variance means higher irregularity. }
\end{remark}

{$\mathbb{E}(\|\varepsilon_{y}(k+1)\|_2^{2})$ represents the mean deviation between actual and predicted positions. To minimize this index, the variances of $\theta_i$ are all expected to be largest, but the specific PDFs of $\theta_i$ can not be obtained.} These solutions to $\bm{\mathrm {P_1}}$ have different values according to other metrics, such as $\mathbb{D}(\|\varepsilon_{y}\|_2^{2})$.
The larger $\mathbb{D}(\|\varepsilon_{y}\|_2^{2})$ will lead to larger fluctuations, which means that attackers are able to make more accurate predictions
sometimes.

\subsection{Optimal Distribution of $\bm{\mathrm {P_2}}$}
Next, we leverage the probability measure and solve problem $\bm{\mathrm {P_2}}$.
Before giving an optimal solution to $\bm{\mathrm {P_2}}$, we provide some definitions as follows. 
\begin{definition}\label{def4}
\emph{(Optimal input prediction)}
With probability metric, let $J_2=\mathbb{P}(\|\varepsilon_{y}(k+1)\|_{2}^{2}\!\leq\!\alpha^2)$, 
if $\forall\,\hat u(k)\in \mathbb{R}^{p}$ satisfies
\begin{align*}
J_2(f_{\theta}(z),\hat u(k),\hat x(k),\alpha)\leq J_2(f_{\theta}(z),\hat u^{*}(k),\hat x(k),\alpha),
\end{align*}

\!\!then $\hat u^{*}(k)$ is an optimal input prediction in respect to $\hat x(k)$ in the sense of the confidence probability.
\end{definition}
\begin{definition}\label{def5}
\emph{(Optimal distribution)}
With probability metric, if an arbitrary PDF vector $f_{\theta}$ satisfies
\begin{align*}
J_2(f_{\theta}(z),\hat u^{*}(k),\hat x(k),\alpha)\geq J_2(f^{*}_{\theta}(z),\hat u^{*}(k),\hat x(k),\alpha),
\end{align*}
\!\!then $f^{*}_{\theta}(z)$ is an optimal distribution.
\end{definition}

With the probability measure as the objective function, the optimal distribution $f^{*}_{\theta}(z)$ can be obtained under condition $\bm{\mathrm{C}_1}$ $\hat{x}(k) = x(k)$. We have
\begin{equation}
\begin{aligned}
J_2=\,&\mathbb{P}(\left\|y(k+1)-\hat y(k+1)\right\|_{2}^{2}\leq\alpha^{2})\\
=\,&\mathbb{P}(\left\| \theta(k)+ B_1u(k) - B_1\hat u(k) \right\|_{2}^{2}\leq\alpha^{2})\\
=\,&\int_{\Omega} f_{\theta}(z)\ \mathrm{d}{z},
\end{aligned}
\end{equation}
{where $\Omega=\{\theta\in\mathbb{R}^m|\big\|
\theta-\widetilde u\| \leq \alpha\}$ and 
\[\widetilde u = [\widetilde u_1, \cdots, \widetilde u_m]^T = B_{1}\hat u(k) - B_{1}u(k).\]
Then $\bm{\mathrm{P_2}}$ can be rewritten as
\begin{equation}\label{probabilitz_problem_case1}
\begin{aligned}
\bm{\mathrm {P_2^0}}:\qquad\min\limits_{f_\theta(z)}&\max\limits_{\hat u(k)}\ \int_{\Omega} f_{\theta}(z)\ \mathrm{d}{z}\\
\text{s.t.} \ &f_{\theta}(z)=f_{\theta}(-z),~ \underline{\Sigma}\leq \Sigma \leq \overline{\Sigma}.
\end{aligned}
\end{equation}
Now optimizing $\hat{u}(k)$ is to find an optimal $\Omega$ for the inner problem.} The following theorem provides a solution.
\begin{theorem}\label{thm2}
Under $\bm{\mathrm{C}_1}$, if $f_{\theta}(z)$ is continuous, then there exist a  constant $\alpha\in (0, \sqrt{3}\min \limits_{i}{\sigma_i}]$,  such that $f^*_{\theta}(z)$ is the multivariate uniform distribution with finite maximum variances, 
 where
\begin{equation}\label{uniform_theta}
\!f^{*}_{\theta}(z)\!=\!\left\{
\begin{aligned}
\centering
\frac{1}{(2\sqrt{3})^m\prod \limits_{i = 1}^{m}\sigma_i}\,&~
\text{if}~z\in [-\sqrt{3}\sigma_i,\sqrt{3}\sigma_i], \\
0,~~~~&\text{otherwise},
\end{aligned}
\right.
\end{equation}
i.e., $f^*_{\theta}(z)$ is an optimal distribution with the confidence probability measure.
\end{theorem}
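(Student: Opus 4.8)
The plan is to begin from the reduced inner problem $\bm{\mathrm{P_2^0}}$ in \eqref{probabilitz_problem_case1}, which is legitimate under $\bm{\mathrm{C}_1}$ because $\varepsilon(k)\equiv 0$ collapses \eqref{predict_error} to $\theta(k)+B_1(u(k)-\hat u(k))$, so the captured probability is $\int_{\Omega}f_{\theta}$ with $\Omega$ the Euclidean ball of radius $\alpha$ centred at $\widetilde u=B_1\hat u(k)-B_1u(k)$. First I would dispatch the inner maximisation: as $\hat u(k)$ ranges over $\mathbb{R}^{p}$ the centre $\widetilde u$ sweeps the range of $B_1$, so the adversary's best move is to place the ball where $f_{\theta}$ accumulates the most mass, i.e. $\max_{\widetilde u}\int_{B(\widetilde u,\alpha)}f_{\theta}=\max_{\widetilde u}(f_{\theta}\ast \mathbf{1}_{B(0,\alpha)})(\widetilde u)$. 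The outer player must then choose $f_{\theta}$ to minimise this worst-case local concentration subject to symmetry and $\underline{\Sigma}\le\Sigma\le\overline{\Sigma}$.

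Next I would reduce the finite-radius objective to a peak-density problem. By the mean value property for integrals, $\int_{B(\widetilde u,\alpha)}f_{\theta}=V_m\alpha^{m}f_{\theta}(\xi)$ for some $\xi$ in the ball, where $V_m$ is the volume of the unit ball in $\mathbb{R}^{m}$; hence for continuous $f_{\theta}$ and $\alpha$ in the admissible range the worst-case capture is governed, to leading order, by the sup-norm $\|f_{\theta}\|_{\infty}$, and the outer minimisation becomes that of flattening $f_{\theta}$ under the variance cap. The role of the ceiling $\alpha\le\sqrt{3}\min_i\sigma_i$ is precisely that the optimal ball then fits inside the support box, so for the candidate distribution the worst-case capture is attained exactly and equals $V_m\alpha^{m}/\mathrm{Vol}$.

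Then the core is a separable extremal computation. Invoking the independence of the components of $u_e$ together with the per-coordinate symmetry imposed on $f_{\theta}$, I would restrict to product densities $f_{\theta}(z)=\prod_i f_i(z_i)$, for which $\|f_{\theta}\|_{\infty}=\prod_i\|f_i\|_{\infty}$ and $\Sigma$ is diagonal with $\Sigma_{ii}=\mathbb{D}(f_i)\le\sigma_i^{2}$. The problem then decouples into $m$ one-dimensional problems: minimise $\|f_i\|_{\infty}$ over symmetric densities with variance at most $\sigma_i^{2}$. Each is solved by a rearrangement/exchange argument — capping a density at its maximum $M_i$ and packing it as centrally as possible minimises the variance for a given peak, and conversely, for a fixed variance the flattest admissible profile is the uniform law on $[-\sqrt{3}\sigma_i,\sqrt{3}\sigma_i]$, giving $\|f_i\|_{\infty}=1/(2\sqrt{3}\sigma_i)$. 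Taking the product recovers \eqref{uniform_theta} with peak value $1/((2\sqrt{3})^{m}\prod_i\sigma_i)$, and since the ball of radius $\alpha$ fits in the box the attained worst-case capture is $V_m\alpha^{m}/((2\sqrt{3})^{m}\prod_i\sigma_i)$, which I would then show no competitor can beat.

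The hard part will be the passage from the per-coordinate variance constraints to the product/box geometry. Under the diagonal caps $\Sigma_{ii}\le\sigma_i^{2}$ alone the sup-norm minimiser over the \emph{full} symmetric class is not the box: a law uniform on a Euclidean ball shares the same marginal variances (for radius $\rho$ in dimension two, $\mathbb{D}=\rho^2/4$) yet has strictly larger support volume, hence strictly smaller $\|f_{\theta}\|_{\infty}$ and strictly smaller capture, so the box is extremal only once one restricts to distributions inherited from an independent-component $u_e$. I would therefore make this restriction explicit — it is exactly the independence assumption of the model — and treat it as the hypothesis under which the decoupling into $m$ scalar problems is valid. Controlling the $o(\alpha^{m})$ correction in the peak-density reduction (in the direction needed to bound competitors from below) is the remaining technical point, and it is what forces the statement to assert the existence of a suitable $\alpha$ in $(0,\sqrt{3}\min_i\sigma_i]$ rather than optimality for every such $\alpha$.
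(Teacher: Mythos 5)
Your opening reduction coincides with the paper's own first step: the paper also passes to $\bm{\mathrm {P_2^0}}$, lets the adversary slide the ball $\Omega$, and uses optimality at arbitrarily small $\alpha$ together with continuity to compare peak densities (inequality \eqref{proof2-1} and the line after it). After that the two routes genuinely diverge: the paper stays inside the full constrained class $\{f_\theta: f_\theta(z)=f_\theta(-z),\ \underline{\Sigma}\leq\Sigma\leq\overline{\Sigma}\}$, upgrades the peak comparison to the pointwise bound $f^{U}_{\theta}\geq f^{*}_{\theta}$ on the box $\Omega_U$, and then forces a contradiction by comparing second moments inside and outside $\Omega_U$ (displays \eqref{thm3_2}--\eqref{thm3_3}); you instead restrict to product densities and solve $m$ decoupled one-dimensional sup-norm problems by rearrangement.

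The crucial point is that your ellipsoid observation is not a side caveat: it is a correct counterexample to the theorem over the class the paper actually optimizes over, and it pinpoints the step where the paper's proof fails. Take $m=2$, $\sigma_1=\sigma_2=\sigma$, and the uniform density on the disk of radius $2\sigma$: its coordinate variances are exactly $\sigma^2$, its covariance is $0$, so it is feasible for \eqref{probabilitz_problem_case1}; its worst-case capture is $\alpha^2/(4\sigma^2)$ for every $\alpha\leq\sqrt{3}\sigma$, strictly below the box value $\pi\alpha^2/(12\sigma^2)$. Shrinking the disk slightly and mollifying keeps it continuous and feasible while preserving the strict gap, so the claimed optimality of \eqref{uniform_theta} fails for \emph{every} admissible $\alpha$, not merely in the limit. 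The paper's step that this defeats is the assertion preceding \eqref{thm3_2} that every coordinate satisfies $z_i^2>3\sigma_i^2$ outside $\Omega_U$: off the box only \emph{some} coordinate exceeds its bound, and in the disk's caps one has $\sum_i z_i^2\leq 4\sigma^2 < 6\sigma^2$, so the strict lower bound in \eqref{thm3_2} is unjustified and the contradiction with \eqref{thm3_3} never materializes. Your instinct that the box can only be extremal under a product-structure restriction is exactly right.

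Two gaps remain on your side, and you should state them as such. First, the product restriction is not part of the problem: $\bm{\mathrm {P_2^0}}$ constrains $f_\theta$ only through symmetry and the covariance bounds, and independence of the components of $u_e$ does not make the components of $\theta = CBu_e$ independent unless $B_1$ is diagonal up to permutation and scaling; so what your argument proves is optimality of \eqref{uniform_theta} within a strictly smaller class than the theorem asserts. Second, the finite-$\alpha$ step --- lower-bounding every competitor's capture by $V_m\alpha^m$ times the box peak with one $\alpha$ valid uniformly over competitors --- is left open; the theorem's quantifier is ``there exists $\alpha$ beating all $f_\theta$,'' while the peak-density argument yields an $\alpha$ depending on the competitor (the paper's proof commits the same slip by fixing $f^{*}_{\theta}$ before letting $\alpha\to0$). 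In sum, your proposal is not a proof of Theorem \ref{thm2} as stated, but it is a correct refutation of that statement in its claimed generality, together with a plausible proof sketch of a repaired, product-class version.
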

\begin{proof}
The proof is given in Appendix \ref{app1}.
\end{proof}

Theorem \ref{thm2} proves that there exists $\alpha$ in a small range such that the multivariate uniform distribution is optimal in the set of continuous functions. When $\alpha$ takes arbitrary values and $f_{\theta}(z)$ is not continuous, the uniform distribution is not optimal and $f^{*}_{\theta}(z)$ is related to the parameter $\alpha$. 

Next, a numerical method to solve $\bm{\mathrm {P^0_2}}$ under $\bm{\mathrm{C}_1}$ is provided. We first confine the range of $\theta$ as 
\[\Omega_1=\left\{\theta\Big| \|\theta\| \leq a, a > \alpha \right\}.\]
Since the input of a system is bounded, this operation is reasonable in practice. By setting boundary $a$ appropriately, the probability that $\theta$ locates outside $\Omega_1$ is smaller than arbitrary given positive parameter. This is guaranteed theoretically by Markov's inequality as
\begin{equation}\label{eq:sphere_boundary}
\mathbb{P}\left(\|\theta\|\geq a\right) = \mathbb{P}\left(\|\theta\|^2_2\geq a^2\right)\leq 
\frac{1}{a^2}\mathbb{E}(\theta^{T}\theta)=\frac{1}{a^2}\sum_{i=1}^{m}\sigma^2_i.
\end{equation}
If $a=5\sqrt{\sum_{i=1}^{m}\sigma^2_i}$, then $\mathbb{P}\left(\|\theta\|< a\right)\geq0.96$.

{\bf{Simplify to One-stage Optimization Problem}}:
For the convenient representation of integral domain, we discretize, $\Omega_1$, into pieces in the high dimensional sphere coordinate system instead of the Cartesian coordinate system. 

When $\widetilde u \neq 0_m$, it is still hard to represent $\Omega$ by the sum of divided pieces and do integral. Hence,  to simplify $\bm{\mathrm {P_2}}$, we give the following theorem to set $\widetilde u = 0$ reasonably.

\begin{theorem}\label{thm3}
Under $\bm{\mathrm{C}_1}$, for $\forall\, \alpha\in \mathbb{R}^{+}$, one optimal solution to $\bm{\mathrm {P_2}}$ is $(f^{*}_{\theta}(z), \hat u^{*}(k)) = (f^{*}_{\theta}(z), 0)$, where 
\[f^{*}_{\theta}(z) = \arg\min_{f_{\theta}(z)}\int_{\Omega_0} f_{\theta}(z)\ \mathrm{d}{z},\]
where $\Omega_0=\left\{\theta \big |\|\theta\| \leq \alpha\right\}\subset \Omega_1$.
\end{theorem}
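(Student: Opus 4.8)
The plan is to identify $(f^{*}_{\theta}, \hat u^{*}(k)) = (f^{*}_{\theta}, 0)$ as an optimal solution of the two-stage problem $P_2^0$ by proving that, for the admissible distributions, the worst-case (attacker-maximizing) prediction ball is the one centred at the origin. Under $\bm{\mathrm{C}_1}$ we have $\varepsilon_y(k+1) = \theta(k) - \widetilde u$ with $\widetilde u = B_1(\hat u(k) - u(k))$, so $J_2 = \int_{B(\widetilde u,\alpha)} f_{\theta}(z)\, dz$, where $B(\widetilde u,\alpha) = \{z : \|z - \widetilde u\| \le \alpha\}$. The inner maximization over $\hat u(k)$ is precisely the maximization of this captured mass over the ball centre $\widetilde u$, so it suffices to analyse $g(\widetilde u) := \int_{B(\widetilde u,\alpha)} f_{\theta}(z)\, dz$ and to show $\widetilde u^{*} = 0$; after absorbing the known nominal input (equivalently taking $u(k)=0$ in this subsection) this centre corresponds to the stated $\hat u^{*}(k)=0$.

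The key step is to prove that $g$ attains its maximum at $\widetilde u = 0$. First I would record first-order evidence: writing $g(\widetilde u) = \int_{B(0,\alpha)} f_{\theta}(z + \widetilde u)\, dz$ and applying the divergence theorem, the gradient at the origin equals $\int_{\partial B(0,\alpha)} f_{\theta}(z)\, n(z)\, dS$, whose integrand is odd because $f_{\theta}(z) = f_{\theta}(-z)$ while the outward normal satisfies $n(-z) = -n(z)$; hence $\nabla g(0) = 0$ and the origin is a stationary centre. To upgrade this to a global maximizer I would invoke Anderson's inequality: for a nonnegative density that is symmetric ($f_{\theta}(z) = f_{\theta}(-z)$) and unimodal (convex, symmetric superlevel sets), and for the symmetric convex body $B(0,\alpha)$, one has $\int_{B(0,\alpha)} f_{\theta}(z + \widetilde u)\, dz \le \int_{B(0,\alpha)} f_{\theta}(z)\, dz$ for every $\widetilde u$. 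Thus $g(\widetilde u) \le g(0)$, so $\widetilde u^{*} = 0$ solves the inner problem.

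With the inner maximizer fixed at $\widetilde u = 0$, the integration domain collapses to $\Omega = \Omega_0 = \{z : \|z\| \le \alpha\}$, so $\max_{\hat u} J_2 = \int_{\Omega_0} f_{\theta}(z)\, dz$ and the two-stage problem reduces to the single-stage $\min_{f_{\theta}} \int_{\Omega_0} f_{\theta}(z)\, dz$ under the symmetry and covariance constraints. By definition $f^{*}_{\theta}$ minimizes this reduced objective, so $\min_{f_{\theta}} \max_{\hat u} J_2 = \int_{\Omega_0} f^{*}_{\theta}$, the pair $(f^{*}_{\theta}, 0)$ realizes the minimax, and it is therefore optimal for $P_2$. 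I would close by noting that the reduction holds for every $\alpha \in \mathbb{R}^{+}$, since the Anderson step uses only symmetry and unimodality of $f_{\theta}$, not the magnitude of $\alpha$.

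The main obstacle is the unimodality hypothesis required by Anderson's inequality: the symmetry constraint alone does not force the attacker's optimal centre to the origin, since a symmetric density concentrated on a thin shell would be captured better by an off-origin ball. I must therefore argue that the optimization can be restricted to symmetric unimodal densities without loss. I would justify this either by a symmetric-decreasing rearrangement that does not increase the origin-ball mass while preserving symmetry and respecting the covariance bounds, or by appealing to the structure already established in Theorem \ref{thm2}, where the optimizer is the centred multivariate uniform distribution, which is manifestly symmetric and unimodal. Reconciling such a rearrangement with the two-sided constraint $\underline{\Sigma} \le \Sigma \le \overline{\Sigma}$ is the delicate point, and is where I expect the argument to demand the most care.
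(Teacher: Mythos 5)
There is a genuine gap, and it sits exactly where you flagged it --- but neither of your proposed patches can close it. Your plan hinges on showing that for \emph{every} admissible $f_{\theta}$ the attacker's best ball center is the origin, via Anderson's inequality. That statement is false on the constraint set of $\bm{\mathrm{P_2}}$, which imposes only symmetry and covariance bounds: as you yourself observe, a symmetric density concentrated on a thin shell of radius $c>\alpha$ has essentially zero mass in $B(0,\alpha)$ but mass near $1/2$ in a ball centered on the shell, so the origin is a strict local \emph{minimum} of the captured mass there (your divergence-theorem stationarity argument is consistent with this). Unimodality is therefore not a technical convenience but the entire content of the step, and it cannot be obtained ``without loss.'' Your first patch, symmetric-decreasing rearrangement, moves in the wrong direction: by the Hardy--Littlewood inequality, $\int_{B(0,\alpha)} f^{\#}\,\mathrm{d}z \geq \int_{B(\widetilde u,\alpha)} f\,\mathrm{d}z$ for every $\widetilde u$, so rearranging can only \emph{increase} the worst-case captured mass that the outer player is trying to minimize; it cannot justify restricting the minimization to unimodal densities. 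The paper itself signals this: radial monotonicity (Proposition \ref{propo1}, problem $\bm{\mathrm{P_2^2}}$) is introduced as a sufficient condition that ``makes the solution suboptimal,'' i.e., a genuine restriction. Your second patch, invoking Theorem \ref{thm2}, covers only continuous densities and $\alpha \in (0,\sqrt{3}\min_i \sigma_i]$, whereas Theorem \ref{thm3} claims all $\alpha \in \mathbb{R}^{+}$.

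The paper's proof avoids the universal claim entirely; the theorem is existential, and its mechanism is a shift-averaging construction plus contradiction. Suppose an optimal pair is $(f^{*}_{\theta},\delta)$ with $\delta\neq 0$. Define $f^{*,2}_{\theta}$ by averaging the $\pm\delta$-shifts of $f^{*}_{\theta}$ over $\Omega_0$, $\Omega_{\delta}$, $\Omega_{-\delta}$. This symmetrization preserves the symmetry constraint, does not increase any variance (the change in $\mathbb{D}(\theta_i)$ is $\delta_i^{2}\bigl(\int_{\Omega_0} f^{*}_{\theta}\,\mathrm{d}z-\int_{\Omega_{\delta}} f^{*}_{\theta}\,\mathrm{d}z\bigr)\leq 0$), and satisfies $\int_{\Omega_0} f^{*,2}_{\theta}\,\mathrm{d}z=\int_{\Omega_{\delta}} f^{*}_{\theta}\,\mathrm{d}z$, i.e., the new \emph{admissible} density attains the optimal value with the ball at the origin. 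If the attacker's best response to $f^{*,2}_{\theta}$ is $0$, the theorem follows; otherwise the construction is iterated, and the captured probability grows strictly (the paper asserts geometrically, with ratio $\lambda>1$) until it would exceed $1$ --- a contradiction. So the correct idea is to \emph{construct a possibly different optimal density} whose best response is the origin, not to prove a best-response property of every admissible density; your route, as you describe it, cannot be repaired into a proof of the stated theorem.
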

\begin{proof}
The proof is provided in Appendix \ref{app2}.
\end{proof}

Theorem \ref{thm3} illustrates that among the optimal solutions of $\bm{\mathrm {P^0_2}}$, there exists an solution satisfying $\hat u^{*}(k) = 0$. A sufficient and necessary condition to make $\hat u^{*}(k) = 0$ hold is that
\begin{equation}\label{inner_opt_condition}
\int_{\Omega_0} f_{\theta}(z)\ \mathrm{d}{z} \geq \int_{\Omega} f_{\theta}(z)\ \mathrm{d}{z}, \forall\, \hat{u}(k)\in \mathbb{R}^p.
\end{equation}
Then, we are able to set $\hat u = 0$ and add \eqref{inner_opt_condition} to $\bm{\mathrm {P_2^0}}$. 
Hence, the two-stage optimization problem is simplified to a one-stage optimization problem as
\begin{equation}\label{P2-3}
\begin{aligned}
\bm{\mathrm {P_2^1}}:\qquad\min\limits_{f_\theta(z)}&\ \int_{\Omega_0} f_{\theta}(z)\ \mathrm{d}{z}\\
\text{s.t.} \ &f_{\theta}(z)=f_{\theta}(-z),\  \underline{\Sigma}\leq \Sigma \leq \overline{\Sigma}, (\ref{inner_opt_condition}).
\end{aligned}
\end{equation}

In $\bm{\mathrm {P_2^1}}$, the constraint \eqref{inner_opt_condition} integrals on $\Omega$ with $\hat u(k) \neq 0$. When the domain is discretized in the spherical coordinate system, the integral domain is hard to calculate as sum of units with $\hat u(k)$ taking arbitrary value except zero. The following proposition gives a sufficient condition to let \eqref{inner_opt_condition} holds. By the proposition, \eqref{inner_opt_condition} is replaced with a new constraint on decision variable $f_{\theta}(z)$. Although the new constraint makes the solution suboptimal, the converted optimization problem is solvable with the state-of-art solvers. 

\begin{proposition}\label{propo1}
If $\forall\,\|z_1\| \leq \|z_2\|$, $f_{\theta}(z_1)\geq f_{\theta}(z_2)$ holds true, 
then we have $\hat u^{*} = \arg \max_{\hat u(k)}{\int_{\Omega}f_{\theta}(z)}= 0$.
\end{proposition}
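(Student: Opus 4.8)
The plan is to prove exactly the sufficient condition \eqref{inner_opt_condition}: that under the stated hypothesis the origin-centred ball $\Omega_0$ captures at least as much of the mass of $f_\theta$ as any shifted ball $\Omega$, so that the inner maximization is solved at $\widetilde u=0$ (hence $\hat u^*=0$). First I would observe that the hypothesis is really a \emph{radial} monotonicity statement: applying it with $\|z_1\|=\|z_2\|$ gives $f_\theta(z_1)=f_\theta(z_2)$, so $f_\theta(z)=h(\|z\|)$ for some non-increasing $h:[0,\infty)\to[0,\infty)$. Changing variables by $w=z-\widetilde u$ turns the object of interest into
\begin{equation*}
g(\widetilde u):=\int_{\Omega}f_\theta(z)\,\mathrm{d}z=\int_{\Omega_0}f_\theta(w+\widetilde u)\,\mathrm{d}w,
\end{equation*}
so that $g(0)=\int_{\Omega_0}f_\theta$, and the claim reduces to showing $g(\widetilde u)\le g(0)$ for every $\widetilde u\in\mathbb{R}^m$.

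Next I would exploit the symmetry already built into both factors. Since $\Omega_0$ and $f_\theta$ are even, the substitution $v=-w$ rewrites $g$ as a convolution,
\begin{equation*}
g(\widetilde u)=\int_{\mathbb{R}^m}\mathbf{1}_{\Omega_0}(v)\,f_\theta(\widetilde u-v)\,\mathrm{d}v=(\mathbf{1}_{\Omega_0}\ast f_\theta)(\widetilde u),
\end{equation*}
of two symmetric, radially non-increasing functions. The desired inequality $g(\widetilde u)\le g(0)$ is then the statement that such a convolution attains its maximum at the origin, which is precisely Anderson's inequality (for the symmetric convex body $\Omega_0$ and the radially unimodal density $f_\theta$). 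If I instead want a self-contained argument, I would unfold $f_\theta$ by the layer-cake formula $f_\theta(z)=\int_0^\infty\mathbf{1}[f_\theta(z)>t]\,\mathrm{d}t$; because $h$ is non-increasing, each super-level set is an origin-centred ball $B_{r(t)}=\{z:\|z\|<r(t)\}$, and Fubini gives
\begin{equation*}
g(\widetilde u)=\int_0^\infty\big|\,\Omega_0\cap(B_{r(t)}-\widetilde u)\,\big|\,\mathrm{d}t,
\end{equation*}
an average over thresholds of the volume of the intersection of two balls whose centres lie a distance $\|\widetilde u\|$ apart.

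It then suffices to prove the geometric lemma that, for fixed radii, the volume $|B(0,\rho_1)\cap B(c,\rho_2)|$ is a non-increasing function of $\|c\|$, and in particular is maximal when the balls are concentric. I would establish this by rotating so that $c$ lies on a single coordinate axis; every hyperplane orthogonal to that axis then cuts the intersection in a cross-section that is itself the intersection of two \emph{concentric} $(m-1)$-dimensional balls, reducing the lemma to an elementary one-dimensional estimate on the cross-sectional areas (equivalently, the monotonicity is a direct instance of the Riesz rearrangement inequality applied to the two ball indicators). Integrating this lemma in $t$ yields $g(\widetilde u)\le g(0)$, which is \eqref{inner_opt_condition}, and therefore the inner maximizer is $\widetilde u=0$, i.e.\ $\hat u^*=0$ once the known offset $B_1 u(k)$ is absorbed in the reduction from $\bm{\mathrm{P_2^0}}$.

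I expect the geometric lemma to be the only real obstacle; the reduction to radial monotonicity, the change of variables, the convolution rewriting, and the layer-cake/Fubini step are all mechanical. The lemma is intuitively obvious but needs care to make rigorous: the cleanest self-contained route is the slicing-plus-cross-section comparison above, while the shortest route simply cites Anderson's inequality (or Riesz rearrangement) and bypasses the slicing entirely. A minor point to keep honest is the identification $\widetilde u=0\Leftrightarrow\hat u^*=0$, which relies on the normalization already in force when $\bm{\mathrm{P_2^0}}$ is reduced to $\bm{\mathrm{P_2^1}}$.
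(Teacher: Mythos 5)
Your proof is correct, but there is nothing in the paper to compare it against: Proposition \ref{propo1} is stated without proof and used directly to pass from $\bm{\mathrm{P_2^1}}$ to $\bm{\mathrm{P^{2}_2}}$, so your argument genuinely fills a gap rather than paralleling an existing one. Every step checks out. Applying the hypothesis with $\|z_1\|=\|z_2\|$ does force $f_\theta(z)=h(\|z\|)$ with $h$ non-increasing; the translation and convolution rewriting are valid; Anderson's inequality applies because the super-level sets of $f_\theta$ are origin-centred balls, hence symmetric and convex; and the self-contained route via the layer-cake formula plus the ball-intersection lemma also works, since the cross-sections orthogonal to the axis through the two centres are concentric $(m-1)$-balls and the remaining one-dimensional fact (the length of the intersection of two intervals of fixed lengths is non-increasing in the distance between their midpoints) is elementary. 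You were also right to flag the notational point: with $\widetilde u = B_1\hat u(k)-B_1u(k)$, what you actually prove is that the inner maximum is attained at $\widetilde u=0$, i.e., at $\hat u(k)=u(k)$ modulo $\ker B_1$; the paper's ``$\hat u^{*}=0$'' is this same statement after the offset $B_1u(k)$ is absorbed, consistent with how Theorem \ref{thm3} is phrased. That said, your machinery is far heavier than needed. Since $\Omega$ and $\Omega_0$ are congruent balls of radius $\alpha$, we have $|\Omega_0\setminus\Omega|=|\Omega\setminus\Omega_0|$; every $z_1\in\Omega_0\setminus\Omega$ satisfies $\|z_1\|\leq\alpha$ while every $z_2\in\Omega\setminus\Omega_0$ satisfies $\|z_2\|>\alpha$, so the hypothesis gives $\inf_{\Omega_0\setminus\Omega}f_\theta\geq\sup_{\Omega\setminus\Omega_0}f_\theta$, hence $\int_{\Omega_0\setminus\Omega}f_{\theta}(z)\,\mathrm{d}z\geq\int_{\Omega\setminus\Omega_0}f_{\theta}(z)\,\mathrm{d}z$ and therefore $\int_{\Omega_0}f_{\theta}(z)\,\mathrm{d}z\geq\int_{\Omega}f_{\theta}(z)\,\mathrm{d}z$ for every $\hat u(k)$. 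This three-line swap argument yields \eqref{inner_opt_condition} directly, requires only measurability of $f_\theta$, and dispenses with Anderson's inequality, Riesz rearrangement, and the slicing lemma entirely; it is almost certainly the argument the authors had in mind when omitting the proof.
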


By proposition \ref{propo1}, the converted problem $\bm{\mathrm {P^{2}_2}}$ is
\begin{equation}\label{P2_approximation}
\begin{aligned}
\bm{\mathrm {P^{2}_2}}:\qquad\min\limits_{f_\theta(z)}\,
&\int_{\Omega_0} f_{\theta}(z)\ \mathrm{d}{z}\\
\text{s.t.} &\ f_{\theta}(z)=f_{\theta}(-z),\   \underline{\Sigma}\leq \Sigma \leq \overline{\Sigma},\\
&f_{\theta}(z_1) \geq f_{\theta}(z_2), ~\forall~ \|z_1\| \leq \|z_2\|.
\end{aligned}
\end{equation}

{\bf{Simplify to Solvable Linear Optimization Problem}}:
In order to discretize $\bm{\mathrm {P_2^2}}$ and integral on spherical region $\Omega$, we convert the coordinates $[\theta_1, \cdots, \theta_m]^{T}$ in Cartesian coordinate system into high-dimensional sphere coordinate by
\begin{equation}\label{theta_transform_sphere}
\left\{
\begin{aligned}
& \theta_1 = r\cos{\phi_1},\\
& \theta_2 = r\sin{\phi_1}\cos{\phi_2}, \\
& ~~~~~\vdots \\
& \theta_{m-1} = r\sin{\phi_1}\cdots, \sin{\phi_{m-2}}\cos{\phi_{m-1}}, \\
& \theta_{m} = r\sin{\phi_1}\cdots \sin{\phi_{m-2}}\sin{\phi_{m-1}}.
\end{aligned}
\right.
\end{equation}
The problem becomes calculating the PDF of $\theta$ on the region that 
\begin{equation}\label{approximation_P2_region}
\begin{aligned}
\Big\{[\phi_1, \cdots, &\phi_{m-1}, r]\in \mathbb{R}^m\Big|\ 0 \leq r\leq a, 0 \leq \phi_1 \leq \pi, \\ & 0 \leq \phi_2 \leq \pi, \cdots,
0 \leq \phi_{m-1} \leq 2\pi\Big\}.
\end{aligned}
\end{equation}

In the sphere coordinate, $\phi_1, \cdots, \phi_{m-1}, r$, of the region \eqref{approximation_P2_region} are divided into $n_1, n_2,\cdots,n_m$ units and the sizes are $\Delta{\phi_1}, \cdots, \Delta{\phi_{m-1}},\Delta{r}$, respectively. $\Delta{r}$ is set by 
\[\Delta{r} = \frac{\alpha}{n^{+}_m} > \frac{a}{n_m},\ n^{+}_m \in \mathbb{N}^{+}.\]
Use tuple $(k_1,\cdots,k_m),1\leq k_i\leq n_i$, to uniquely denote an unit. Then, it follows that
\begin{equation}
\left\{
\begin{aligned}
&\phi_{i, k_i} =\, \phi_{i,0} + k_i\Delta{\phi_i},\,\phi_{i,0} = 0,\,\phi_{i,n_i} = \pi,\\ & \qquad\qquad i = 1,2,\cdots, m-2, \\
&\phi_{m-1, k_{m-1}} =\,\phi_{m-1,0} + k_{m-1}\Delta{\phi_{m-1}},\,\phi_{m-1,0} = 0,\\& \qquad \qquad\phi_{m-1,n_{m-1}} = 2\pi, \\
& r_{k_m} =\, r_0 + k_m\Delta r,\,r_0 = 0, r_{n_m} = n_m\Delta{r}.
\end{aligned}
\right.
\end{equation}
The internal point $[\phi_1, \cdots, \phi_{m-1}, r]^{T}$ of one unit satisfies
\begin{equation}
\left\{
\begin{aligned}
\phi_{i, k_i-1} \leq\, &\phi_i \leq \phi_{i, k_i},\,i = 1,2,\cdots,m-1,\\
r_{k_m-1} \leq\, &r \leq r_{k_m}.
\end{aligned}
\right.
\end{equation}

Next, we give the discretization form of the objective function and constraints of $\bm{\mathrm {P^{2}_2}}$. Denote the probability that $\theta$ locates in $(k_1, k_2, \cdots, k_m)$-th unit as $p_{\{k_i\}_{i=1}^{m}}$. After discretization, the objective function becomes
\begin{equation*}\label{P3_objective}
\begin{aligned}
&J_c^{0}=
\sum_{k_1=1}^{n_1}\cdots\sum_{k_{m-1}=1}^{n_{m-1}}\sum_{k_{m}=1}^{n^{+}_{m}}
p_{\{k_i\}_{i=1}^{m}}.
\end{aligned}
\end{equation*}

As for the constraints, $p_{\{k_i\}_{i=1}^{m}}$ varies from $0$ to $1$ in each unit and the sum of all probabilities equals to one. This is described by
\begin{subequations}\label{P3_constraint1}
\begin{align}
0 \leq \,&p_{\{k_i\}_{i=1}^{m}}\leq 1, \\
\sum_{k_1, k_2, \cdots, k_m}&p_{\{k_i\}_{i=1}^{m}} = 1.
\end{align}
\end{subequations}

Denote the covariance of $\theta_i$ and $\theta_j$ generated by the unit $(k_1,\cdots,k_{m-2}, k_{m-1}, k_m)$ as $\Sigma_{ij,\{k_i\}_{i=1}^{m}}$. For the probability and covariance of each divided piece, it follows that
\begin{equation}\label{eq:pr_covar_inequality}
(\theta_i\theta_j)_{\min}p_{\{k_i\}_{i=1}^{m}}\leq\Sigma_{ij,\{k_i\}_{i=1}^{m}}\leq(\theta_i\theta_j)_{\max}p_{\{k_i\}_{i=1}^{m}}.
\end{equation}
We can use \eqref{theta_transform_sphere} to determine $(\theta_i\theta_j)_{\min}$ and $(\theta_i\theta_j)_{\max}$ easily in the corresponding unit. For example, if $i = 1, j = 2$ and $m>2$, it holds that
\begin{equation}
\begin{aligned}
(\theta_1\theta_2)_{\max} &= \left[\frac{1}{2}r^2\sin 2\phi_1\cos\phi_2\right]_{\max}\\
&= \frac{1}{2}r_{k_m}^2\left[\sin 2\phi_1\cos\phi_2\right]_{\max},
\end{aligned}
\end{equation}
where $\phi_1\in[\phi_{1,k_1-1}, \phi_{1,k_1}]$, $\phi_2\in[\phi_{2,k_2-1}, \phi_{2,k_2}]$ and $[\sin 2\phi_1\cos\phi_2]_{\max}$ can be obtained easily.

The first constraint in $\bm{\mathrm {P^{2}_2}}$ is that
$f_{\theta}(z)$ is symmetrical about the vector $y$. The discretization form is 
\begin{equation}\label{P3_constraint2}
p_{\,k_1,\cdots, k_{m-1}, k_m} = p_{\,n_1-k_1,\cdots, k_{m-1}+\frac{n_{m-1}}{2}, k_m}.
\end{equation}
The sum of covariances generated by all units equals to the covariances of whole space $\Omega$ and the second constraint of $\bm{\mathrm {P_2}}$ is rewritten as
\begin{equation}\label{P3_constraint3}
{\underline{\Sigma}}_{ij}\leq\Sigma_{ij} \rightarrow \sum_{k_1, k_2, \cdots, k_m}\Sigma_{ij,\{k_i\}_{i=1}^{m}} \leq {\overline{\Sigma}}_{ij}.
\end{equation}

The third constraint becomes 
\begin{equation}\label{P3_constraint_last}
p_{\,\{k_i\}_{i=1}^{m-1},\,k_m+1}\leq p_{\,\{k^{+}_i\}_{i=1}^{m-1},\,k_m}.
\end{equation}
After the discretization, we have 
\begin{equation}\label{eq:linear_opt}
\begin{aligned}
\bm{\mathrm {P^0_3}}: &\min\limits_{p_{\{k_i\}_{i=1}^{m}},\  \Sigma_{ij,\{k_i\}_{i=1}^{m}}} J_c^o\left(p_{\{k_i\}_{i=1}^{m}}, \Sigma_{ij,\{k_i\}_{i=1}^{m}}\right)\\
&~~~~~~~\text{s.t.}~~~~\eqref{P3_constraint1},\,\eqref{eq:pr_covar_inequality},\,\eqref{P3_constraint2},\, \eqref{P3_constraint3}\ \text{and}\, \eqref{P3_constraint_last}.\ 
\end{aligned}
\end{equation}

Note that $\bm{\mathrm {P^0_3}}$ is a linear optimization problem about decision variable $p_{\{k_i\}_{i=1}^{m}}$ and $\Sigma_{ij,\{k_i\}_{i=1}^{m}}$. Hence, its optimal solution can be obtained by the existing solver with polynomial time complexity. By solving $\bm{\mathrm {P_3^0}}$, we obtain an optimal distribution of $\theta(k)$ on $\Omega$.

\begin{remark}\label{rem2.1}
Comparing $\bm{\mathrm {P_1}}$ and $\bm{\mathrm {P_2^0}}$, the solution to $\bm{\mathrm {P_1}}$ only gives conditions on variance, while probability measure $\mathbb{P}(\left |\cdot \right|\leq \alpha^2)$ in $\bm{\mathrm {P_2^0}}$ provides the exact distribution.
\end{remark}

For $\bm{\mathrm{C}_2}$ in $\bm{\mathrm {P_2}}$, $f_{\theta}(z)$ is an optimal distribution iff $CA\varepsilon(k)+\theta(k)$ 
subject to the optimal distribution $f^{*}_{\theta}(z)
$. Since the PDF of $\varepsilon(k)$ is unknown, the optimal distribution of $\theta$ cannot be achieved. But the optimal distribution under $\bm{\mathrm{C}_1}$ is able to be taken to approximate that under $\bm{\mathrm{C}_2}$ for following reasons.
First, when $\mathbb{E}(\|CA\varepsilon(k)\|)\ll \mathbb{E}(\|\theta(k)\|)$, which is reasonable in practice, the solution to $\bm{\mathrm{C}_1}$ is approximately optimal for $\bm{\mathrm{C}_2}$. Second, $\varepsilon(k)$ will not degrade the performance of random input with arbitrary PDF $f_{\theta_m}(z)$ as shown in the following theorem.

\begin{theorem}\label{thm4}
{Let $\hat u^{*}_{1}(k)$ and $\hat u^{*}_{2}(k)$ be the optimal input predictions for $\hat x(k)\neq x(k)$ and $\hat x(k)= x(k)$, respectively. Then, there $\exists \,\alpha_1\in\mathbb{R}^{+}$, such that 
\begin{align*}
J_2(f_{\theta}(z),\hat u^{*}_{1}(k),\hat x(k),\alpha)\leq 
J_2(f_{\theta}(z),\hat u^{*}_{2}(k),\hat x^{*}(k),\alpha), 
\end{align*}
holds for $\forall\,\alpha \in (0,\alpha_1]$.}
\end{theorem}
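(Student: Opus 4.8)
The plan is to reduce both sides to the maximization, over the attacker's admissible ball center, of a single ``ball-mass'' functional of $f_\theta$, and then to show that randomly shifting that center by the estimation error can never beat its pointwise maximum. Write $g(c)=\int_{\{\|z-c\|\le\alpha\}}f_\theta(z)\,\mathrm{d}z$ for the probability that $\theta(k)$ lands in the ball of radius $\alpha$ centered at $c$. Under $\bm{\mathrm{C}_1}$ the one-step error collapses to $\varepsilon_y(k+1)=\theta(k)-\widetilde u$ with $\widetilde u=B_1\bigl(\hat u(k)-u(k)\bigr)$, so $J_2=g(\widetilde u)$ and, as in Theorem \ref{thm3}, the optimal prediction $\hat u_2^*$ attains $J_2(f_\theta,\hat u_2^*,\hat x^*(k),\alpha)=\max_{\widetilde u}g(\widetilde u)$ (here $\hat x^*(k)=x(k)$). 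First I would record this identity, noting that $\widetilde u=0$ is admissible.

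Next I would treat $\bm{\mathrm{C}_2}$ by conditioning on $\eta:=CA\varepsilon(k)$, which is zero-mean and, by the independence of $\varepsilon(k)$ and $\theta(k)$ assumed in $\bm{\mathrm{C}_2}$, independent of $\theta(k)$. Then $\varepsilon_y(k+1)=\eta+\theta(k)-\widetilde u$, and conditioning on $\eta$ turns the probability into a shifted ball mass, giving $J_2(f_\theta,\hat u_1^*,\hat x(k),\alpha)=\max_{\widetilde u}\,\mathbb{E}_\eta\bigl[g(\widetilde u-\eta)\bigr]$. This is the key reformulation: passing from the accurate to the inaccurate estimate replaces the deterministic center $\widetilde u$ by the random center $\widetilde u-\eta$, averaged against the (unknown, zero-mean) law of $\eta$.

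The decisive ingredient is the pointwise domination $g(c)\le g(0)$ for every $c$. This holds because the design distribution $f_\theta$ is symmetric and radially non-increasing---the monotonicity imposed in Proposition \ref{propo1} and realized by the optimal uniform law of Theorem \ref{thm2}---so by a symmetrization argument the ball mass is largest when the ball is centered at the origin. Granting this, for every admissible $\widetilde u$ we get $\mathbb{E}_\eta\bigl[g(\widetilde u-\eta)\bigr]\le\mathbb{E}_\eta\bigl[g(0)\bigr]=g(0)=\max_{\widetilde u}g(\widetilde u)$, and maximizing over $\widetilde u$ on the left yields exactly $J_2(f_\theta,\hat u_1^*,\hat x(k),\alpha)\le J_2(f_\theta,\hat u_2^*,\hat x^*(k),\alpha)$. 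Because this bound does not involve $\alpha$ at all, any $\alpha_1\in\mathbb{R}^+$ satisfies the conclusion.

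I expect the main obstacle to be precisely the claim $g(c)\le g(0)$: mere symmetry $f_\theta(z)=f_\theta(-z)$ does not force the ball mass to peak at the center, so the argument leans on the radial monotonicity already enforced by the design. If one insists on assuming only symmetry, this is where the restriction $\alpha\in(0,\alpha_1]$ earns its keep: a small-$\alpha$ expansion localizes the comparison to the origin, giving $g(c)\approx f_\theta(c)\,\mathrm{vol}\bigl(B(0,\alpha)\bigr)$ and $\mathbb{E}_\eta[g(c-\eta)]\approx (f_\eta*f_\theta)(c)\,\mathrm{vol}\bigl(B(0,\alpha)\bigr)$, so that the relevant inequality $(f_\eta*f_\theta)(0)=\int f_\eta(w)f_\theta(w)\,\mathrm{d}w\le f_\theta(0)$ recovers the claim locally, with $\alpha_1$ chosen so that the leading-order remainder is dominated.
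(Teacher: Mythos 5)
Your setup matches the paper's proof exactly: condition on the estimation error $\eta=CA\varepsilon(k)$ (independent of $\theta$), write the attacked probability as an averaged shifted ball mass $\mathbb{E}_\eta[g(\widetilde u-\eta)]$, and compare with the unshifted problem. The gap is in your decisive step. You benchmark the left side against $g(0)$, i.e.\ you need the pointwise domination $g(c)\le g(0)$, and you correctly observe that this requires radial monotonicity of $f_\theta$. But Theorem \ref{thm4} is stated (and is used in the text) for an \emph{arbitrary} admissible $f_\theta$, assumed only to be symmetric; the monotonicity of Proposition \ref{propo1} is a constraint imposed in the converted problem $\bm{\mathrm {P^{2}_2}}$, not a standing hypothesis here. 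Your fallback for the merely-symmetric case does not close the hole: the inequality $(f_\eta*f_\theta)(0)\le f_\theta(0)$ is simply false in general (take $f_\theta$ symmetric and bimodal with modes at $\pm v$, and $f_\eta$ concentrated near $\pm v$; then $(f_\eta*f_\theta)(0)\approx f_\theta(v)>f_\theta(0)$), and no choice of small $\alpha_1$ repairs it. The deeper issue is that without monotonicity the optimal accurate-estimation predictor need not correspond to $\widetilde u=0$, so $g(0)$ (or $f_\theta(0)$) is the wrong benchmark altogether.

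The fix — and this is what the paper actually does — is to bound the shifted ball mass by the \emph{maximum} over the attacker's admissible predictions rather than by its value at the origin: for every realization of $\eta$, $g(\widetilde u-\eta)\le \sup_{c}g(c)$, and $\sup_{c\in\mathbb{R}^m}g(c)=\max_{\hat u(k)}g(\widetilde u)$, so that $\mathbb{E}_\eta[g(\widetilde u-\eta)]\le \max_{\hat u(k)}g(\widetilde u)=J_2(f_\theta(z),\hat u^{*}_{2}(k),\hat x^{*}(k),\alpha)$. The identification of the global supremum with the supremum over achievable centers is the only point needing care: it is immediate when $B_1$ has full row rank (every center is achievable), and in general it follows because $\theta=B_1u_e$ is supported on $\mathrm{range}(B_1)$, so for any $c$ the ball mass satisfies $g(c)\le g(Pc)$ with $P$ the orthogonal projection onto $\mathrm{range}(B_1)$ and $Pc$ achievable. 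This route needs neither radial monotonicity nor a small-$\alpha$ expansion, and it proves the inequality for every $\alpha>0$ (so, as you anticipated, the theorem's restriction to $(0,\alpha_1]$ is not actually load-bearing). In short: your skeleton is the paper's, but replacing the comparison point $\max_{\hat u(k)}g(\widetilde u)$ by $g(0)$ imports an assumption the theorem does not grant, and that substitution is where the proof breaks.
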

\begin{proof}
The proof is given in Appendix \ref{app4}.
\end{proof}


\subsection{Calculations of Extra Inputs}
To introduce $\theta(k)$ into the state transitions of the system \eqref{eq1},
an extra control input $u_e(k)$ is added given by \eqref{extra_input}. This subsection gives the method to calculate extra input $u_e(k)$.

For $\bm{\mathrm {P_1}}$, the optimal distribution $f_{\theta}(z)$ satisfies \[\mathbb{D}(\theta_i)=\sigma_i^{2}, i = 1,2, \cdots, m,\] which holds true iff each component of $u_e(k)$ takes the maximum variance. Hence, with the variance, there is no need to generate $\theta(k)$ and calculate $u_e(k)$. $u_e(k)$ is directly generated with arbitrary distribution with maximum variances.

For $\bm{\mathrm {P_2}}$, the specific form of $u_{e}(k)$ is expected. We first obtain an optimal distribution of $\theta(k)$ by solving $\bm{\mathrm {P_3^0}}$. Then, $\theta(k)$ is generated according to the optimal distribution. $u_e(k)$ is calculated by the linear equations
\begin{equation}\label{linear_equation}
    B_{1}u_{e}(k) = \theta(k).
\end{equation}
Hence, when we generate $\theta(k)$, \eqref{linear_equation} should be solvable. The solution $u_{e}(k)$ is discussed as follows. 
\begin{itemize}
    \item If $\rank(B_1) = m$, the linear equations \eqref{linear_equation} are solvable, and the solution is
\begin{equation}\label{extra_input_rank1}
u_{e}(k) = (B^{T}_{1}B_1)^{\dagger}B^{T}_{1}\theta(k).
\end{equation}
\item  If $\rank(B_1) = b < m$, there exists 
\[T_1 \in \left \{T_1 \in \mathbb{R}^{m \times m}|\ T_1^{T}T_1 = I_{m \times m}\right\}\]  
such that 
\begin{equation}\label{ue_rank_b}
\begin{aligned}
T_{1}B_{1}u_{e} &= \left[
\begin{array}{c}
T_{11}\\
T_{21}
\end{array}\right]B_1u_{e} \\&= \left[
\begin{array}{c}
T_{11}B_1\\
0_{(m-b)\times q}
\end{array}\right]u_{e} = \left[
\begin{array}{c}
T_{11}\\
T_{21}
\end{array}\right]\theta,
\end{aligned}
\end{equation}
and $\rank(T_{11}B_1) = b$. 
To obtain $u_e(k)$, a serial of additional constraints $T_{21}\theta = 0_{m-b}$ should be considered in the solving of $f_{\theta}(z)$ by $\bm{\mathrm {P_3^0}}$. However, $\bm{\mathrm {P_3^0}}$ is a linear programming problem about $p_{\{k_i\}_{i=1}^{m}}$ and $\Sigma_{\{k_i\}_{i=1}^{m}}$, and the constraints are hard to be imposed directly. Hence, we take linear transformations on $\theta$ as $\theta^{+} = T_1\theta$ first to change the conditions into constraints on some components to be zero. Then, we will show that the optimal distribution of $\theta^{+}$ can be obtained from solving a similar but solvable optimization problem $\bm{\mathrm {P_3^0}}$.
\end{itemize}

Based on the above discussion, we let 
\begin{align*}
\theta^{+} = T_1\theta &= [\theta^{+}_1, \cdots, \theta^{+}_b, \cdots, \theta^{+}_m]^{T} \\&= [\theta_1^{+}, \cdots, \theta_b^{+}, 0, \cdots, 0]^{T}.
\end{align*}
Using the property of orthogonal matrices, one follows
\begin{align*}
\Omega_0&=\left\{\theta \in \mathbb{R}^m\Big|\|\theta\|^{2} \leq \alpha^{2}\right\} \\&= \left\{\theta^{+} \in \mathbb{R}^m\Big|\|\theta^{+}\|^{2} \leq \alpha^{2}\right\}.
\end{align*}
Clearly, the PDF of $\theta^{+}$ is symmetric and its covariance matrix satisfies
\begin{equation}\label{sigma_theta_plus}
\underline{\Sigma}^{+} = T_1\underline{\Sigma} T_1^{T}  \leq \Sigma^{+} \leq T_1\overline{\Sigma} T_1^{T} =  \overline{\Sigma}^{+}.
\end{equation}
The constraint of $\theta$ that $~\forall~\|z_1\| \leq \|z_2\|, f_{\theta}(z_1) \geq f_{\theta}(z_2)$ also holds for $\theta^{+}$ by following proposition.

\begin{proposition}\label{propo2}
For $\forall~z_1, z_2\in \mathbb{R}^{m}$, if $f_{\theta}(z_1)\geq f_{\theta}(z_2)$ holds for $\|z_1\|\leq \|z_2\|$, it holds that $f_{\theta^{+}}(z_1)\geq f_{\theta^{+}}(z_2)$, where  $\theta^{+} = T_1\theta$ and $T_1$ is an orthogonal matrix. 
\end{proposition}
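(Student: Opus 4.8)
The plan is to exploit the fact that $T_1$ is orthogonal, so it preserves Euclidean norms, and then relate the two probability density functions through the standard change-of-variables (Jacobian) formula. Since $\theta^{+} = T_1 \theta$ is a linear bijection, the density of $\theta^{+}$ is
\begin{equation*}
f_{\theta^{+}}(z) = \frac{1}{|\det T_1|}\, f_{\theta}\bigl(T_1^{-1} z\bigr) = f_{\theta}\bigl(T_1^{T} z\bigr),
\end{equation*}
where I used $|\det T_1| = 1$ and $T_1^{-1} = T_1^{T}$ for an orthogonal matrix. This reduces the claim to a statement purely about the given density $f_{\theta}$ evaluated at the pulled-back arguments $T_1^{T} z_1$ and $T_1^{T} z_2$.

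The key step is the norm-preservation property: because $T_1^{T} T_1 = I$, we have $\|T_1^{T} z\| = \|z\|$ for every $z \in \mathbb{R}^m$. Hence the hypothesis $\|z_1\| \leq \|z_2\|$ immediately yields $\|T_1^{T} z_1\| \leq \|T_1^{T} z_2\|$. Now I would set $w_1 = T_1^{T} z_1$ and $w_2 = T_1^{T} z_2$; these satisfy $\|w_1\| \leq \|w_2\|$, so the assumed monotonicity property of $f_{\theta}$ (namely that $f_{\theta}(w_1) \geq f_{\theta}(w_2)$ whenever $\|w_1\| \leq \|w_2\|$) applies directly and gives $f_{\theta}(w_1) \geq f_{\theta}(w_2)$.

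Finally I would chain the two observations together: by the change-of-variables identity, $f_{\theta^{+}}(z_1) = f_{\theta}(T_1^{T} z_1) = f_{\theta}(w_1)$ and $f_{\theta^{+}}(z_2) = f_{\theta}(T_1^{T} z_2) = f_{\theta}(w_2)$, and since $f_{\theta}(w_1) \geq f_{\theta}(w_2)$ we conclude $f_{\theta^{+}}(z_1) \geq f_{\theta^{+}}(z_2)$, which is exactly the desired conclusion. This establishes that the radial-monotonicity constraint is invariant under orthogonal transformation, precisely what is needed so that solving the discretized problem $\bm{\mathrm{P_3^0}}$ for $\theta^{+}$ inherits the same structural constraint that was imposed on $\theta$.

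I do not anticipate a genuine obstacle here, as the result is essentially a one-line consequence of orthogonal invariance of both the norm and the density; the only point requiring mild care is writing the density transformation correctly (confirming the Jacobian factor is unity) and keeping the direction of the inequality consistent when passing from the radial hypothesis to the conclusion. If one wished to avoid invoking the explicit density formula, an alternative is to note that the level sets of a radially monotone density are balls, which orthogonal maps send to balls, so the pushforward density $f_{\theta^{+}}$ is again radially monotone; but the direct Jacobian argument is cleaner and fully rigorous.
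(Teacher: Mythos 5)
Your proof is correct and takes essentially the same approach as the paper: both arguments rest on norm preservation by the orthogonal matrix $T_1$ combined with the change-of-variables identity $f_{\theta^{+}}(z) = f_{\theta}(T_1^{T}z)$ (unit Jacobian). The only difference is cosmetic --- the paper evaluates the densities at the pushed-forward points $T_1 z_1, T_1 z_2$ while you pull $z_1, z_2$ back through $T_1^{T}$, and your version is in fact the cleaner writeup, since the paper's displayed identity contains a notational slip in which matrix is applied inside $f_{\theta}$.
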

\begin{proof}
Since $\|z_1\|\leq \|z_2\|$ and $T_1$ is an orthogonal matrix, one infers that 
\begin{equation}
\|y^{+}_1\| = \|T_1z_1\| = \|z_1\| \leq \|z_2\|= \|T_1z_2\| = \|z^{+}_2\|.
\end{equation}
Then PDF of $\theta^{+}$ follows that
\begin{equation}
f_{\theta^{+}}(z^{+}_1) = f_{\theta}(Tz_1) \geq f_{\theta}(Tz_2) = f_{\theta^{+}}(z^{+}_2).
\end{equation}
and we have completed the proof.
\end{proof}

To obtain the optimal extra input, similarly, we convert the coordinates $[\theta^{+}_1, \cdots, \theta^{+}_m]^T$ in Cartesian coordinate system into high dimensional sphere coordinate by
\begin{equation}\label{theta_plus_transform_sphere}
\theta^{+}_{i} = \left\{
\begin{aligned}
& r\cos{\phi_1^{+}},~i = 1,\\
&  r\sin{\phi_1^{+}}\cdots \sin{\phi_{i}^{+}}\cos{\phi_{i}^{+}},~i=2, \cdots, b-1,\\
& r\sin{\phi_1^{+}}\cdots \sin{\phi_{b-2}^{+}}\sin{\phi_{b-1}^{+}},~i=b,\\
& 0,~i = b+1, \cdots, m.
\end{aligned}
\right.
\end{equation}
Note the main differences between the coordinate of $\theta$  in \eqref{theta_transform_sphere} and that of $\theta^{+}$ in \eqref{theta_plus_transform_sphere} is that the last $b$ dimensions of $\theta^{+}$ equals zero.
Similarly, $\phi_1^{+}, \cdots, \phi_{b-1}^{+}, r$ are divided into $n_1, \cdots,n_b$ units and the sizes are $\Delta{\phi_1^{+}}, \cdots, \Delta{\phi_{b-1}^{+}},\Delta{r}$, respectively. Similar to the formulation of $\bm{\mathrm {P^{0}_3}}$. By replacing $m$ with $b$ and $\underline{\Sigma}$ with ${\underline{\Sigma}^{+}}$ ($\overline{\Sigma}$ with $\overline{\Sigma}^{+}$), the discrete optimization problem $\bm{\mathrm {P^{1}_3}}$ is set up to obtain an optimal distribution of $\theta^{+}$.
The objective function is 
\begin{equation*}\label{P3_1_objective}
\begin{aligned}
&J_c^{1}=
\sum_{k_1=1}^{n_1}\cdots\sum_{k_{b-1}=1}^{n_{b-1}}\sum_{k_{b}=1}^{n^{+}_{b}}p_{\{k_i\}_{i=1}^{b}}.
\end{aligned}
\end{equation*}
$p_{\{k_i\}_{i=1}^{b}}$ varies from $0$ to $1$ and the sum of all probabilities equals to one, i.e.,
\begin{subequations}\label{P3_1_constraint1}
\begin{align}
0 \leq \,&p_{\{k_i\}_{i=1}^{b}}\leq 1, \\
\sum_{k_1, k_2, \cdots, k_b}&p_{\{k_i\}_{i=1}^{b}} = 1.
\end{align}
\end{subequations}
For the probability and covariance of each divided piece, it follows that
\begin{equation}\label{eq:pr_covar_inequalitz_b}
(\theta_i\theta_j)_{\min}p_{\{k_i\}_{i=1}^{b}}\leq\Sigma_{ij,\{k_i\}_{i=1}^{b}}\leq(\theta_i\theta_j)_{\max}p_{\{k_i\}_{i=1}^{b}}.
\end{equation}
Since $f_{\theta^+}(z)$ is symmetrical about the vector $y^+$, we have
\begin{equation}\label{P3_1_constraint2}
p_{\,k_1,\cdots, k_{b-1}, k_b} = p_{\,n_1-k_1,\cdots, k_{b-1}+\frac{n_{b-1}}{2}, k_b}.
\end{equation}
Similar to  \eqref{P3_constraint3}, the covariances of $\theta^+$ satifies
\begin{equation}\label{P3_1_constraint3}
{\underline{\Sigma}}_{ij}^+\leq\Sigma_{ij}^+ = \sum_{k_1, k_2, \cdots, k_b}\Sigma_{ij,\{k_i\}_{i=1}^{b}} \leq {\overline{\Sigma}}_{ij}^+.
\end{equation}
By Proposition \ref{propo2}, let the PDF of $\theta^+$ decreases with the norm of $\theta^+$ and $p_{\{k_i\}_{i=1}^{b}}$ follows that
\begin{equation}\label{P3_1_constraint_last}
p_{\,\{k_i\}_{i=1}^{b-1},\,k_b+1}\leq p_{\,\{k^{+}_i\}_{i=1}^{b-1},\,k_b}.
\end{equation}
Then $\mathrm{P^1_3}$ is formulated as
\begin{equation}
\begin{aligned}
\bm{\mathrm {P^1_3}}: &\min\limits_{p_{\{k_i\}_{i=1}^{b}},\  \Sigma_{ij,\{k_i\}_{i=1}^{b}}} J_c^1\left(p_{\{k_i\}_{i=1}^{b}}, \Sigma_{ij,\{k_i\}_{i=1}^{b}}\right)\\
&~~~~~~~\text{s.t.}~\eqref{P3_1_constraint1},\,\eqref{eq:pr_covar_inequalitz_b},\,\eqref{P3_1_constraint2},\, \eqref{P3_1_constraint3}\ \text{and}\  \eqref{P3_1_constraint_last}.\ 
\end{aligned}
\end{equation}

\begin{remark}
{
$\mathrm{P^1_3}$ is a linear programming. The complexity of solving $\mathrm{P^1_3}$ is $\mathcal{O}((n+n_c)^{1.5}nL)$, where $n$ is the number of optimized variables, i.e., $n = \prod \limits_{i=1}^bn_i$. $n_c$ is the number of constraints and $L$ is the number of bits\cite{vaidya1989speeding}. This suffers the curse of dimensionality, but $\mathrm{P^1_3}$ can be solved offline. The PDF of random input is stored and used as look-up table for real-time control.}
\end{remark}

After solving $\bm{\mathrm {P^{1}_3}}$, we are able to achieve the distribution of $\theta^{+}$ and generate specific value of $\theta^{+}$ according to the distribution. Then, the extra input is obtained as follow. 

{\bf{Extra Input Calculation}}: From \eqref{ue_rank_b} and $\theta^+ =T_1\theta$, one obtains that
\begin{equation}
\begin{aligned}
T_{1}B_{1}u_{e} &= \left[
\begin{array}{c}
T_{11}\\
T_{21}
\end{array}\right]B_1u_{e}= \theta^+.
\end{aligned}
\end{equation}
Let $\theta^+_{1:b}$ be the subvector of $\theta$, which contains from the first element to the $b$-th element of $\theta^+$. Clearly, we have
\begin{equation}\label{ue_rankb_theta_plus}
T_{11}B_1u_e = \theta^+_{1:b}.
\end{equation}
Since $\rank(T_{11}B_1) = b$,  we have \eqref{ue_rankb_theta_plus} is solvable and the solution is
\begin{equation}\label{extra_input_rank2}
\begin{aligned}
u_e = ((T_{11}B_1)^{T}T_{11}B_1)^{\dagger}(T_{11}B_1)^{T}\theta^{+}_{1:b}.
\end{aligned}
\end{equation}

Finlay, the detailed way to generate the optimal unpredictable input is given by the Algorithm \ref{algo-1}. The inputs of the algorithm are desired variances of each component of $u_e$ and the rank of matrix $B_1$. The output of the algorithm is the value of extra input $u_e$. In the algorithm, there are two cases divided by the relationship between the rank of matrix $B$, i.e., parameter $b$, and the output dimension $m$. If $b < m$, the distribution $p_{\{k_i\}_{i=1}^{b}}$ of $\theta^{+}$ is obtained by $P^{1}_3$, and $\theta^{+}$ is obtained by its distribution. The extra input is generated by  \eqref{extra_input_rank2} with $\theta^{+}$. If $b = m$, the distribution $p_{\{k_i\}_{i=1}^{m}}$ of $\theta$ is obtained by $P^{0}_3$ and $\theta$ is obtained by its distribution. The extra input is generated by \eqref{extra_input_rank1} with $\theta$.

\begin{algorithm}[t]
    \caption{Unpredictable Control Generations}
    \label{algo-1}
    \begin{algorithmic}[1]
    \REQUIRE{$\sigma_u, b = \rank(B_1)$}
    \ENSURE{$u_e$}
    \IF {$b < m$}
    {\STATE Calculate $\underline{\Sigma}, \overline{\Sigma}$ by \eqref{sigma_theta} and $\underline{\Sigma}^{+}, \overline{\Sigma}^{+}$ by \eqref{sigma_theta_plus}
    \STATE Solve $\bm{\mathrm {P^{1}_3}}$ to achieve $p_{\{k_i\}_{i=1}^{b}}(f_{\theta^{+}}(z))$
    \STATE Use $p_{\{k_i\}_{i=1}^{b}}$ to generate $\theta^{+}$
    \STATE $u_e = ((T_{11}B_1)^{T}T_{11}B_1)^{\dagger}(T_{11}B_1)^{T}\theta^{+}_{1:b}$}
    \ELSE
    {
     \STATE Calculate $\underline{\Sigma}$ and $\overline{\Sigma}$ by \eqref{sigma_theta}
     \STATE Solve $\bm{\mathrm {P_3^0}}$ to achieve $p_{\{k_i\}_{i=1}^{m}}(f_{\theta}(z))$
     \STATE Use $p_{\{k_i\}_{i=1}^{m}}$ to generate $\theta$
     \STATE $u_e = (B_1^{T}B_1)^{\dagger}B_1^{T}\theta$
    }		
    \ENDIF
    \STATE \textbf{Return} $u_e$
    \end{algorithmic}
\end{algorithm}

{With the stochastic input added to the original input of a system, the outputs of the system are hard to be predicted and achieve the unpredictability under variance and probability measures. However, the extra input differs original evolution of the system and degrades the control performance with original controller. Thus, we will consider the control performance under unpredictable control in the following section.}

\section{Combined with Existing Control Framework}\label{seciv}
 In this section, we quantify the performance degradation by the extra input and illustrate how to combine unpredictable control method with existing control framework. Both traditional LQR control and cooperation control are considered. 
\subsection{Combined with Constrained LQR}
The classical constrained linear quadratic regulator (CLQR) problem is formulated as follows
\begin{equation}\label{eq:eq_lqr}
\begin{aligned}
&\min\limits_{x,u} \sum_{k = 0}^{\tau_1}\frac{1}{2}x(k)^{T}Q_kx(k)+q^{T}_kx(k)
+\frac{1}{2}u(k)^{T}R_ku(k)\\
&~\text{s.t.}\ x(k+1) = Ax(k) + Bu(k),\\
&~~\quad\ x(k) \in \mathcal{X}_k,~ u(k) \in \mathcal{U}_k.
\end{aligned}
\end{equation}
where $Q_k$ and $R_k$ are positive definite matrices and $\mathcal{X}_k,\mathcal{U}_k$ are convex sets. 
By solving this problem, we are able to get $u^{*}$ and take $u^{*}(0)$ as the control policy at current time. 

{To make the system outputs unpredictable, we solve the problem \eqref{eq:eq_lqr} first and add an extra input according to the input and state constraints. The constraints on unpredictable control are written as $u_{e}(k) \in \mathcal{U}_{e,k}$. We bound $u_{e}(k)$ according to Chebyshev inequality:
$$\mathbb{P}(|u_{e,i}(k)|\geq \lambda \sigma_{u,i})\leq \frac{1}{\lambda^2},~ i = 1, 2, \cdots, p.$$
We choose $\lambda = 10$ as an example to ensure a 99\% probability and let $\lambda \sigma_u \in \mathcal{U}_{e,k}$. This will guarantee the random input and system satisfying constraints. $\sigma_u$ is chosen by trade-off between unpredictability and extra cost in objective function.
The distribution of $x(k)$ is denoted by 
$x(k)\sim\mathcal{D}(\mu_k, \Sigma_{x,k})$. Since $u_{e}(k) \sim \mathcal{D}(0, \Sigma)$, the evolution of the mean and
covariance of $x(k)$ are
\begin{equation}
\left\{
\begin{aligned}
&\mu_{k+1}=\,A\mu_k+Bu(k),\\
&\Sigma_{x,k+1}=\,A\Sigma_{x,k}A^{T}+\Sigma.
\end{aligned}
\right.
\end{equation}
The extra cost by the stochastic input is first quantified. For the expectation of the objective function,
\begin{equation}\nonumber
\begin{aligned}
\sum_{k = 0}^{\tau_1}\mathbb{E}\left[\frac{1}{2}x(k)^{T}Q_kx(k)+q_kx(k)
+\frac{1}{2}{u^{+}(k)}^{T}R_ku^{+}(k)\right],
\end{aligned}
\end{equation}
where $u^{+}(k) = u(k)+u_e(k)$ and} the term satisfy
\begin{equation}\nonumber
\begin{aligned}
&\mathbb{E}\!\left[{u^{+}(k)}^{T}R_ku^{+}(k)\right]\!=\!u(k)^{T}R_ku(k)\!+\!\mathbb{E}\left[u_{e}^{T}(k) R_k u_{e}(k)\right]\\
&=\,u^{T}(k) R_k u(k)+\tr\left(\mathbb{E}\left[\sqrt{R_k}u_{e}(k)(\sqrt{R_k}u_{e}(k))^{T}\right]\right)\\
&=\,u^{T}(k) R_ku(k)+\tr\left(\mathbb{E}\left[\sqrt{R_k}u_{e}(k)u_{e}^{T}(k)\sqrt{R_k}\right]\right)\\
&=\,u^{T}(k) R_ku(k)+\tr\left(\sqrt{R_k}\Sigma\sqrt{R_k}\right)\\
&=\,u^{T}(k) R_ku(k)+\tr\left(\Sigma R_k\right).\\
\end{aligned}
\end{equation}
Similarly
\begin{equation}
\begin{aligned}
\mathbb{E}\left[{x}^{T}(k) Q_kx(k)\right]=\mu_k^{T}Q_k\mu_k+\tr\left(\Sigma_{x,k} Q_k\right).
\end{aligned}
\end{equation}
{Therefore, the extra cost is changed to
\begin{equation}\label{je}
\begin{aligned}
J_e = \frac{1}{2}\tr\left(\Sigma_{x,k} Q_k\right)+\frac{1}{2}\tr\left(\Sigma R_k\right),
\end{aligned}
\end{equation}
which is generated by the extra input.} 

Note that this cost is influenced by the covariance matrix of the extra input. Since the covariance matrix $\Sigma$ also determines the unpredictability according to Theorem \ref{thm1}, $\Sigma$ is set to be a decision variable to achieve a trade-off between control performance and privacy of the output data. 
Lastly, the optimization problem is set up as
\begin{equation}\label{covariance_opt1-1}
\begin{aligned}
&\min_{\sigma_u}w_1J_e-w_2\min_{\hat u(k)}\mathbb{E}\left(\|\varepsilon_{y}(k+1)\|_2^{2}\right),\\
&\text{s.t.} \quad 0\leq\sigma_{u,i}\leq\overline{\sigma}_{u,i},~i=1,2,\cdots,p.
\end{aligned}
\end{equation}

The solution to problem given by \eqref{covariance_opt1-1} is the optimized $\sigma_{u}$, which is one of the inputs of Algorithm \ref{algo-1}. Then, the unpredictable control is generated with considerations of control performance. To solve this problem, we
substitute the expression of $J_e$ from \eqref{je} to \eqref{covariance_opt1-1} and with Theorem \ref{thm1}, and the optimization problem becomes
\begin{equation}\label{covariance_opt1-2}
\begin{aligned}
&\min_{\sigma_u} -w_2\sum_{i=1}^{p}\sigma_i^2+
\frac{w_1}{2}\sum_{i=1}^{p}(r_{ii}+q_{ii})\sigma_{u,i}\\
&\text{s.t.} \quad 0\leq\sigma_{u,i}\leq\overline{\sigma}_{u,i},~i=1,2,\cdots,p,
\end{aligned}
\end{equation}
where $r_{ii}, q_{ii}$ are $i$-th row $i$-th column item of $R,Q$.
This is a standard quadratic optimization and is able to be solved by {the state-of-art solvers like Gurobi \cite{gurobi}}.

\subsection{Combined with Cooperative Control}
When unpredictable control is adopted by cooperative agents, since random states of one agent have an effect on others by interactions, the convergence of cooperative control is degraded inevitably. This part quantifies the performance degradation and illustrates how to choose variances for each agent to achieve a trade-off between unpredictability and cooperation performance.

{A graph is defined to represent the communication topology among cooperative agents, Let $\mathcal{G = (V, E)}$ be a directed graph with vertex set $\mathcal{V} = \{v_1, v_2, \cdots, v_N\}$ and edge set $\mathcal{E \subseteq V \times V}$.  $(v_i,v_j) \in \mathcal{E}$ indicates that $v_j$ can receive information from $v_i$. The neighbor set of agent $i$ is denoted by $\mathcal{N}_i$, where $v_j \in N_i$ iff $(v_i,v_j) \in \mathcal{E}$. The adjacency matrix is defined as $A^{+}=[a^{+}_{ij}]\in \mathbb{R}^{N \times N}$ with $a^+_{ij}=1$ for $(v_i,v_j) \in \mathcal{E}$  and $a^+_{ij}=0$ for otherwise. The Laplacian matrix is $ L = D-A^+ $, where $ D = \mathrm{diag} (d_1,\ldots,d_N)$ with $d_i = \sum\limits_{j = 1}^{N}a^+_{ij}$.} 

{Considering the unpredictability and the cooperation control, the dynamic of each agent $i$ is given by
\begin{equation}
\begin{aligned} 
x_i(k+1) = A^+x_i(k) + Bu_i(k) + \theta_i(k),
\end{aligned}
\end{equation}
where $\theta_i(k)$ is the unpredictable control input. Usually, $u_i(k)$ is a linear feedback designed by
\begin{equation}\label{eq:cooperative_control}
\begin{aligned}
u_i(k) = \gamma K\sum_{j\in \mathcal{N}_i}a^{+}_{ij}(x_j - x_i),
\end{aligned}
\end{equation}
where $K$ is the feedback matrix and $a^{+}_{ij}$ is the element of adjacency matrix.} Hence, the collective closed-loop dynamics for the system is
\begin{equation}\label{collective_eq}
\begin{aligned}
x_c(k+1) =&\,\left[ (I_N\otimes A^+) - \gamma L \otimes BK\right]x_c(k)+\theta_c(k)\\
 =&\,A_cx_c(k)+\theta_c(k),
\end{aligned}
\end{equation}
where $x_c(k) = [x_1^{T}(k), x_2^{T}(k),\cdots, x_N^{T}(k)]^{T}$ and $\theta_c(k) = [\theta_1^{T}(k), \cdots, \theta_N^{T}(k)]^{T}$. 
When $\theta_c(k) = 0$, there are some existing methods \cite{lewis2013cooperative} to select the feedback matrix $K$ to guarantee $A_c$ (marginally) stable and consensus.

{To quantify the performance degradation in the convergence of cooperative control, the cooperative index is introduced as
\begin{equation}
\begin{aligned}
J_{co}(x_c) = \frac{1}{2}x^{T}_cQx_c+q^{T}x_c.
\end{aligned}
\end{equation}
The system achieves consensus iff  $J_{co}(x_c)$ takes the minimum value $0$, i.e., $J_{co}^{*} = 0$. }

When the stochastic input $\theta_c(k)$ is added, the distribution of $x_c(k)$ is denoted by 
$x_c(k)\sim\mathcal{D}_{c}(\mu_c(k), \Sigma_c)$. Since $\theta_i \sim \mathcal{D}(0, \Sigma_i)$, under the dynamics (\ref{collective_eq}), the evolution of the mean and variance of $x_c(k)$ are 
\begin{equation}
\left\{
\begin{aligned}
&{\mu_c(k+1)=\,A_c\mu_c(k)},\\
&\Sigma_c(k+1)=\,A_c\Sigma_c(k)A^{T}_c+\Lambda,
\end{aligned}
\right.
\end{equation}
where $\Lambda=\mathrm{diag}(\Sigma_1,\cdots,\Sigma_N)\in \mathbb{R}^{nN\times nN}$. 

{Denote the noiseless cooperative performance as $J_{co}^{-}(x_c)$.
Considering the unpredictable design, the performance degradation by unpredictable control is
\begin{equation}
\begin{aligned}
\Delta J_{co} = \mathbb{E}\left[J_{co}(x_c)\right]-\mathbb{E}\left[J_{co}^{-}(x_c)\right] =\frac{1}{2}\tr(\Sigma_cQ).
\end{aligned}
\end{equation}
}
When $A_c$ is asymptotically stable, $\mu_c(k)$ and $\Sigma_c(k)$ are convergent with time step $k$, i.e., 
\begin{align*}
\lim \limits_{k \to + \infty}\mu_c(k) = \mu^{*}, \lim \limits_{k \to + \infty}\Sigma_c(k) = \Sigma_c^{*}.
\end{align*}
When $A_c$ is marginally stable, $\mu^{*}$ exists and is infinite, but $\lim \limits_{k \to + \infty}\Sigma_c(k)$
may be unbounded and not exists \cite{katewa2018privacy}. For the cases when both $\mu^{*}$ and $\Sigma_c^{*}$ exist, we have
\begin{equation}\label{cooperation_degrade}
\begin{aligned}
\Delta J_{co}^{*} = \lim \limits_{k \to + \infty}\Delta J_{co} = \frac{1}{2}\tr(\Sigma^{*}_cQ),
\end{aligned}
\end{equation}
where $\Sigma^{*}_c$ is the solution to following Lyapunov equation
\begin{equation}
\begin{aligned}
\Sigma_c = A_c\Sigma_cA_c^{T}+\Lambda.
\end{aligned}
\end{equation}

{To design the variances of input in Algorithm \ref{algo-1} for each agent, three factors, i.e., cooperation performance, extra energy consumption and unpredictability, are taken into consideration. 
The optimal variances $\sigma_i^j$ are obtained from solving the following optimization problem.}
\begin{equation}\label{covariance_opt2-2}
\begin{aligned}
&\min_{\sigma^1_u,\cdots,\sigma^N_u}\frac{w_1}{2}\tr\left(\Sigma_c^{*}Q\right)+
\frac{w_2}{2}\sum_{j=1}^{N}\tr\left(\Sigma_{u}^jR\right)
-w_3\sum_{j=1}^{N}\sum_{i=1}^{p}\sigma^{j}_{u,i}\\
&\quad \text{s.t.} \quad 0\leq\sigma^{j}_{i}\leq\overline{\sigma}^{j}_{i},~i = 1,\cdots,p,~j = 1,\cdots,N.
\end{aligned}
\end{equation}
Similar to problem \eqref{covariance_opt1-2}, problem \eqref{covariance_opt2-2} is a quadratic optimization problem about elements in the covariance matrices $\Sigma_u^j$, which can be solved by existing solvers.

\section{Simulation Results}\label{secv}
In this section, referring to real-word applications and model used in \cite{le2013differentially}, we fist take the second-order integral model to verify theoretical results. It is shown how to use algorithm \ref{algo-1} to calculate an optimal distribution for $\bm{\mathrm {P_2}}$. Then, we simulate a single system with unpredictable control input and demonstrate the outputs of the system as paths on 2-D plane intuitively. Effects of random input variances and distributions on $\varepsilon_y(k+1)$ are studied. Next, we combine unpredictable control with the constrained linear quadratic regulator to achieve the trade-off between unpredictability and control performance. Finally, we illustrate multi-agent cooperation with unpredictable control. 

\subsection{System with Unpredictable Control Input}
Denote the control period as $T_s$. The second-order integral model with unpredictable control is described by
\begin{equation}\label{eq:2nd_order_model}
\begin{aligned}
x(k+1) =& \left[ \begin{array}{cccc}
    1 & 0 & T_s & 0 \\
    0 & 1 & 0 & T_s \\
    0 & 0 & 1 & 0 \\
    0 & 0 & 0 & 1
\end{array}\right]x(k)+
\left[\begin{array}{cc}
 \frac{1}{2}T_s^2 & 0 \\
 0 & \frac{1}{2}T_s^2\\
 T_s & 0\\
 0   & T_s\\
\end{array}
\right](u+u_e)\\
y =& \left[\begin{array}{cccc}
   1 & 0 & 0 & 0 \\
   0 & 1 & 0 & 0
\end{array}
\right]x(k).
\end{aligned}
\end{equation}

\noindent The system described by \eqref{eq:2nd_order_model} is controllable and observable. Let the control period be $T_s = 1s$ and the upper bound of the covariance of unpredictable control $u_e$ be $\overline{\Sigma}_u = \mathrm{diag}(\frac{1}{2},\frac{1}{2})$. Then, the upper bound of the covariance of $\theta$ is
$\overline{\Sigma} = \mathrm{diag}(\frac{1}{8}T_s^4,\frac{1}{8}T_s^4)$, and the lower bound is $\underline{\Sigma} = 0_{2\times2}$. 
Theorem \ref{thm1} shows that the unpredictability with variance measure is maximized iff the covariance of $\theta$ satisfies 
\[\Sigma =\overline\Sigma=\mathrm{diag}(\frac{1}{8}T_s^4,\frac{1}{8}T_s^4)\]
or iff $\Sigma_u = \overline{\Sigma}_u$. Next, we illustrate how to use Algorithm \ref{algo-1} to calculate the optimal distribution for $\bm{\mathrm {P_2}}$. 

According to \eqref{eq:sphere_boundary}, we can set $a=5\sqrt{\sum_{i=1}^{m}\sigma^2_i}=2.5$.
In the sphere coordinate, we divide $\phi, r$ into $n_1 = 1$ unit and $n_2 = 26$ units, respectively. Then, a linear optimization problem as  \eqref{eq:linear_opt} is set up to achieve the optimal distribution of $\theta$. By solving the problem, we obtain the probability $p_i$ and covariance $\Sigma_{ij}$ of each piece in sphere coordinate. One discrete form of the optimal distribution is shown in Fig. \ref{fig:theta_opt_pdf}. The random vector $(z_{1,i},z_{2,i})$ in piece $i$ is calculated by
\begin{equation}
    z_{1,i} = \pm \sqrt{\frac{\Sigma_{11,i}}{p_i}},\,z_{2,i} = \pm \sqrt{\frac{\Sigma_{22,i}}{p_i}}.
\end{equation}
\begin{figure}[htb]
 	\centering
 	\includegraphics[width=7.8cm,height=5.4cm]{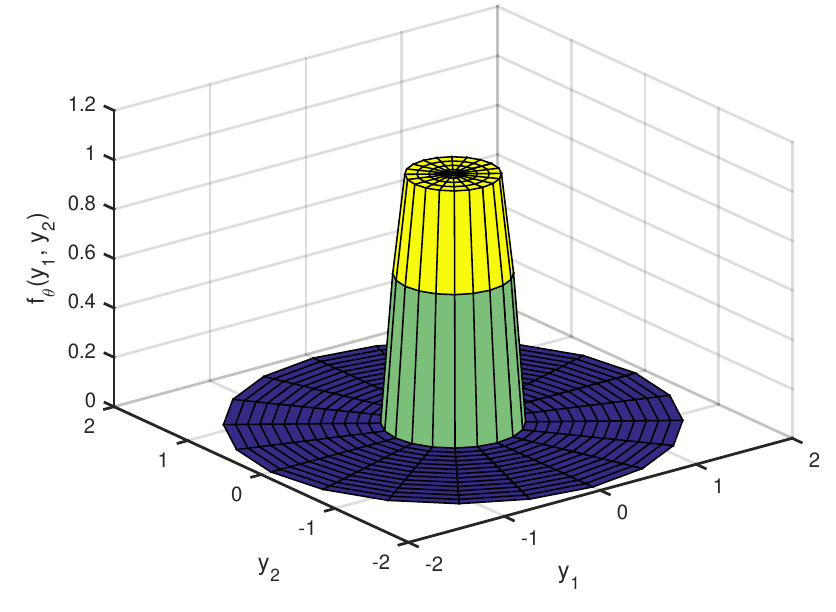}
 	\caption{An optimal distribution of $\theta$.}
    \vspace{-5pt}
 	\label{fig:theta_opt_pdf}
\end{figure}

We contrast the unpredictability of the calculated optimal distribution, an uniform distribution, with Gaussian and Laplace distribution, which are commonly used in the security of CPSs \cite{le2013differentially,he2018privacy}. The covariances of four types of distributions are all the same. The variance and probability metrics in $\bm{\mathrm{P_1}},\bm{\mathrm{P_2}}$ are taken and the results are shown in table \ref{table:pdf_contrast}. From the table, all distributions have the same variance metric. Among all distributions, the optimal distribution has the maximum unpredictability with probability metric. When $\alpha$ is small and takes $\alpha = 0.1,0.2,0.4$, the uniform distribution has the smallest probability metric and maximum unpredictability. This verifies Theorem \ref{thm2}. The uniform distribution has larger confidence probability than that of the optimal distribution. This is because the optimal distribution is not continuous, which is beyond the PDF compared with in Theorem \ref{thm2}.

\begin{table*}
\centering
\caption{\centering Unpredictability with input by four different distributions and without random input ($S = \|\varepsilon_{y}(k+1)\|_2^{2}\|$)}
\label{table:pdf_contrast}
\begin{tabular}{cccccc}
\toprule
Distribution & $\mathbb{E}(S)$ & $\mathbb{P}(S\leq0.1^2)$ & $\mathbb{P}(S\leq0.2^2)$ & $\mathbb{P}(S\leq0.4^2)$ & $\mathbb{P}(S\leq0.8^2)$ \\
\hline
Optimal PDF & 0.25 & \bf{0.0168} & \bf{0.0672}  & \bf{0.269} & \bf{0.779} \\
\hline
Uniform & 0.25  & 0.0209  & 0.0838 & 0.335 & 0.988 \\
\hline
Gaussian & 0.25 & 0.0392 & 0.1479 & 0.473 & 0.923\\
\hline
Laplace  & 0.25 & 0.0902 & 0.2632 & 0.590 & 0.903\\
\hline
No random & 0.0 & 1.0 & 1.0 & 1.0 & 1.0\\
\bottomrule
\end{tabular}
\end{table*}

\subsection{System with Unpredictable and LQR Control}
The system described by \eqref{eq:2nd_order_model} can denote a moving agent on 2-D plane. The inputs are accelerations along two axes and the states are positions and velocities. The initial state is $x(0) = [0, 0, 0, 0]^T$ and the expected end state is $x(t_f) = [20, 20, 0, 0]^T$. Suppose the system is originally governed by a LQR controller described by \eqref{eq:eq_lqr}. 
Suppose the measurement of attacker satisfies $\mathcal{N}(0,\mathrm{diag}(0.01, 0.01))$. Besides, the attacker uses Kalman filter algorithm to predict the position of the agent. In the algorithm, covariances of process noise and observation noise are set to be $\Sigma = \mathrm{diag}(1,1)$. The attacker achieves the optimal input prediction $\hat u^{*}(k)=u(k)$ at each time stamp.

Fig. \ref{fig_single_agent} illustrates complexity of agent motion with stochastic input compared to move without random input. Random input $\theta$ subjects to the optimal distribution with mean $0$ and upper covariance $\overline{\Sigma}_u = \mathrm{diag}(\frac{1}{2},\frac{1}{2})$. With unpredictable input, the trajectory of the agent becomes irregular and harder to be predicted accurately, even if the attacker has prior knowledge of the optimal distribution.

Fig. \ref{fig3} displays the relationship between unpredictability and covariances of the random input, which verifies the Theorem \ref{thm1}. Figures \ref{fig3.b} is achieved by smoothing data in Fig.\ref{fig3.a} by \[\overline S(k)=\frac{1}{3}(S(k-1)+S(k)+S(k+1)),\]
where $S = \|\varepsilon_{y}(k+1)\|_2^{2}\|$. 
\begin{figure}[t]
 	\centering
 	\includegraphics[width=6cm,height=5cm]{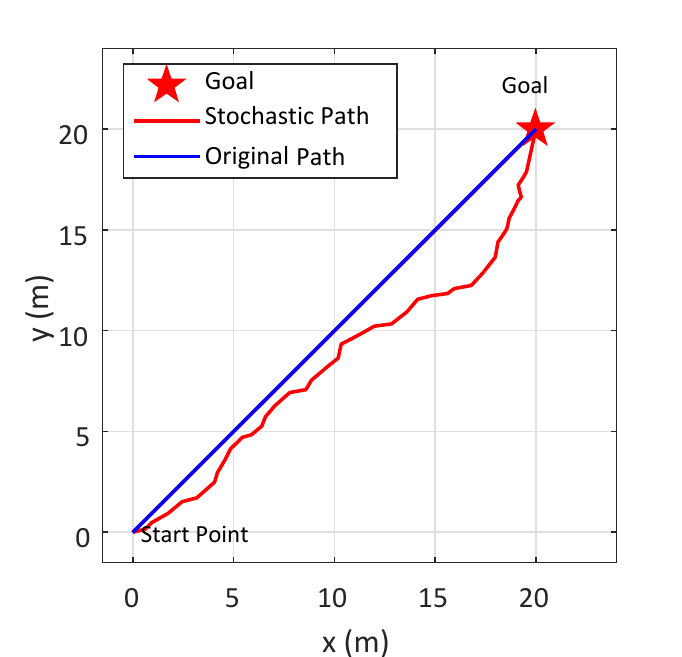}
 	\caption{Original and stochastic path of agent.}
    \vspace{-5pt}
 	\label{fig_single_agent}
\end{figure}
\begin{figure}[t]
\begin{center}
\subfigure[without smoothing.]{\label{fig3.a}
\includegraphics[width=0.23\textwidth]{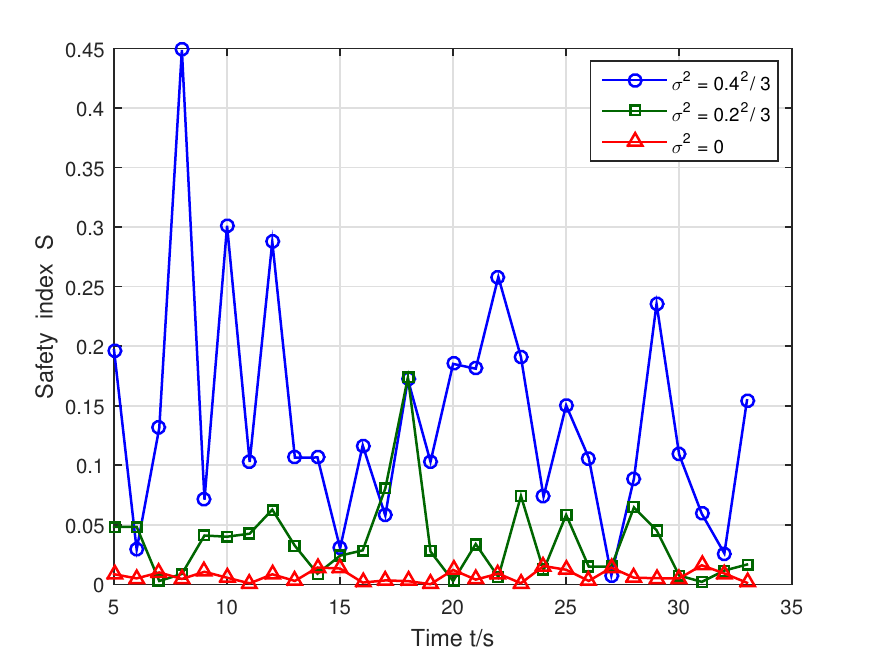}}
\subfigure[with smoothing.]{\label{fig3.b}
\includegraphics[width=0.23\textwidth]{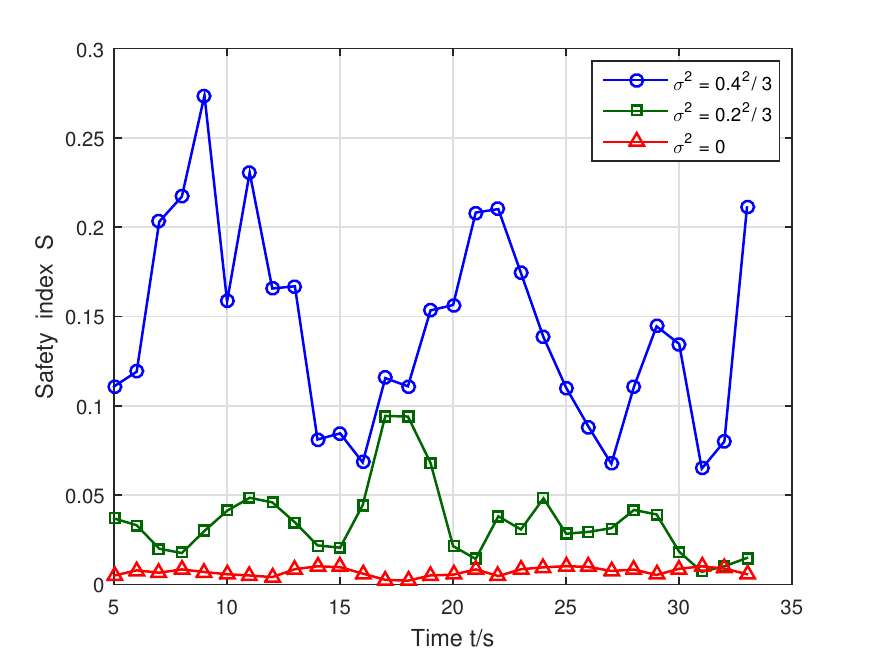}}
\caption{Prediction errors with the optimal distribution inputs with different covariances.}
\vspace{-10pt}
\label{fig3}
\end{center}
\end{figure}
\subsection{One Multi-agent System with Unpredictable Control and Cooperative Control Input}
Suppose the collective dynamics of the multi-agent system is described by  \eqref{collective_eq}. The cooperative control is described by  \eqref{eq:cooperative_control} with $\gamma_i=\frac{1}{2(1+d_i)}$ and $N=5$. The adjacency matrix $A^{+}$ and weight matrix $W$ are
\begin{gather*}
\centering
A^{+}=\left[
\begin{array}{ccccc}
0&0&0&0&1\\
1&0&1&0&1\\
0&1&0&1&0\\
0&0&0&0&1\\
0&0&0&0&0
\end{array}
\right]
\text{and} \ \,
W=\left[
\begin{array}{ccccc}
0&1&0&0&1\\
1&0&1&0&1\\
0&1&0&1&0\\
0&0&1&0&1\\
1&1&0&1&0
\end{array}
\right]_.
\end{gather*}
The initial positions of five agents are
(2,1), (-5,3), (-4,-3), (1,-3) and (0,0). The desired formation is described by
$\Delta^{1}=[-2,-4,-3,-1,0]^{T}$ and $\Delta^{2}=[2,0,-1.5,-1.5,0]^{T}$.
We set the same covariances $\Sigma$ for all agents.

Fig.~\ref{fig4} is the formation of five agents. Compared Fig.~\ref{fig4.a} with Fig.~\ref{fig4.b}, when random inputs are added, the formation is not formed and convergence level is degraded. Fig.~\ref{fig5.a} shows convergence level and performance degradation.
In Fig.~\ref{fig5.a}, the performance degradation $\Delta J_{co}$ is converged to $\Delta J_{co}^{*}$ with time. With unpredictable inputs, the error relative to desired formation $J_{co}$ fluctuates and performance degradation equals to deviation between expectation of $J_{co}$ and original $J_{co_0}$. Fig.~\ref{fig5.b} demonstrates$ \Delta J_{co}$ increase with the trace of covariance.
\begin{figure}[t]
\begin{center}
\subfigure[$t=30s$, formation control without unpredictable control.]{\label{fig4.a}
\includegraphics[width=4cm,height=3cm]{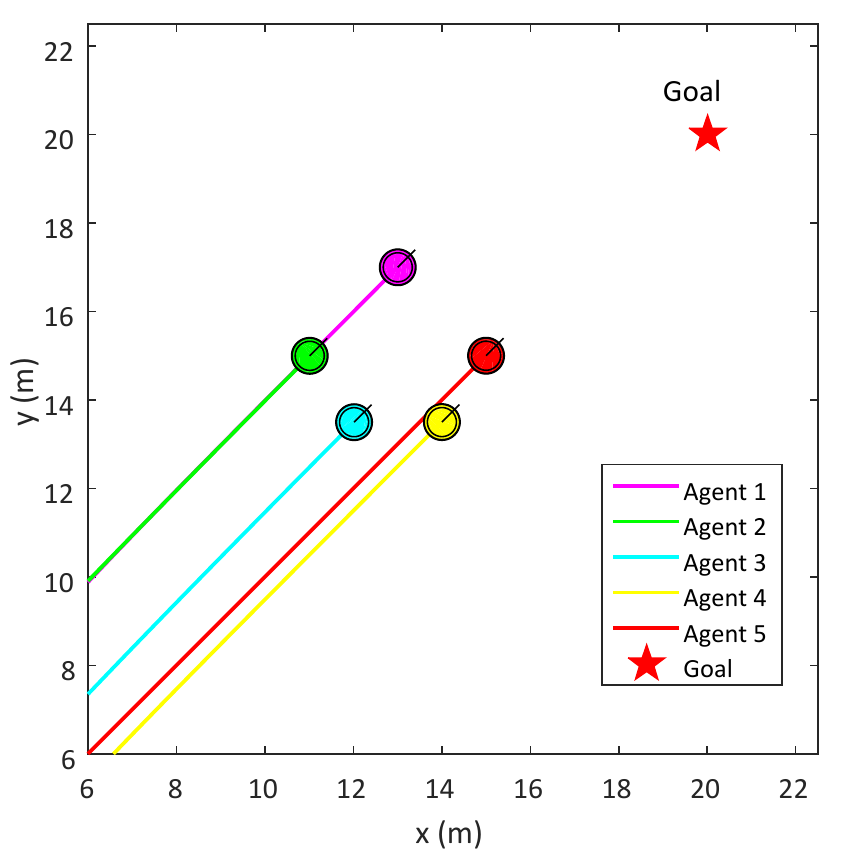}}
\subfigure[$t=30s$, formation control with unpredictable control.]{\label{fig4.b}
\includegraphics[width=4cm,height=3.1cm]{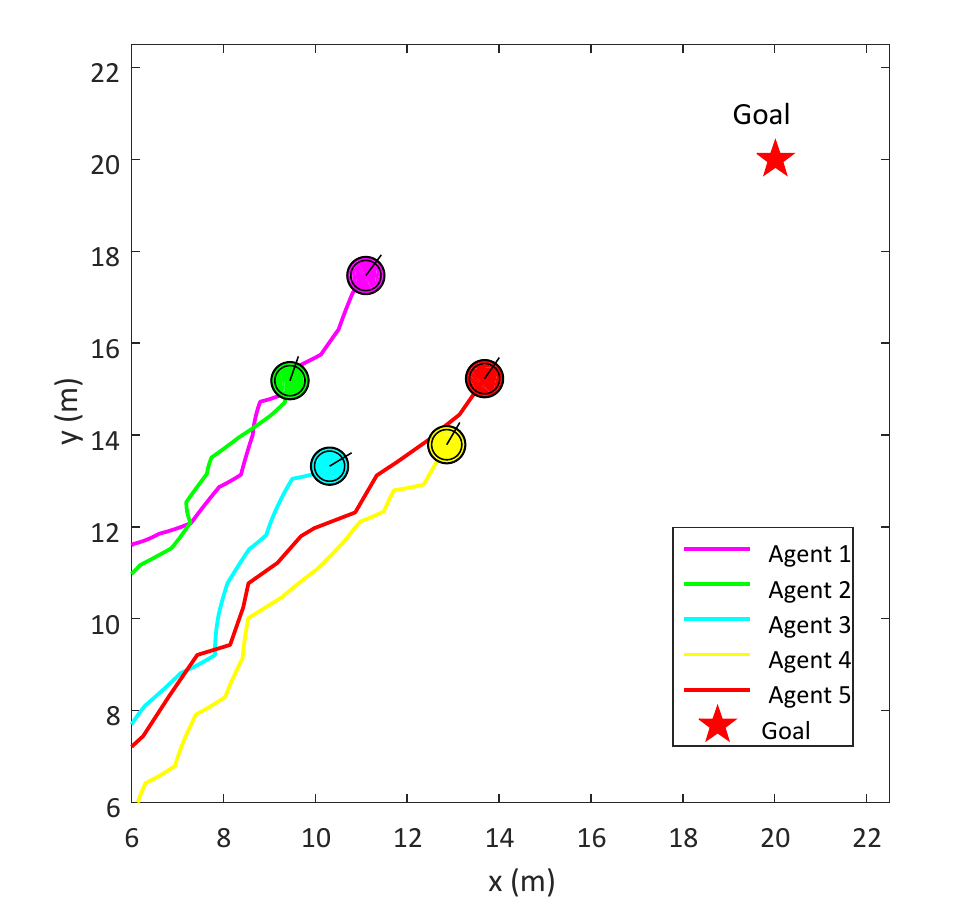}}
\caption{Illustration of $N=5$ agents in formation.}
\label{fig4}
\end{center}
\vspace*{-5pt}
\end{figure}

\begin{figure}[t]
\begin{center}
\subfigure[The convergence level and performance degradation.]{\label{fig5.a}
\includegraphics[width=0.23\textwidth]{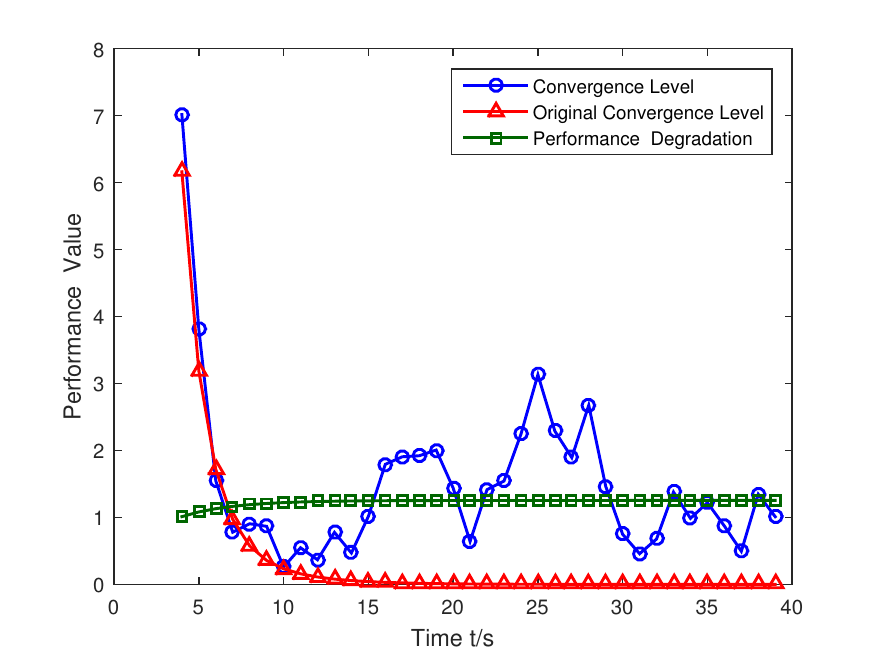}}
\subfigure[Performance degradations when the covariance takes different values.]
{\label{fig5.b}
\includegraphics[width=0.23\textwidth]{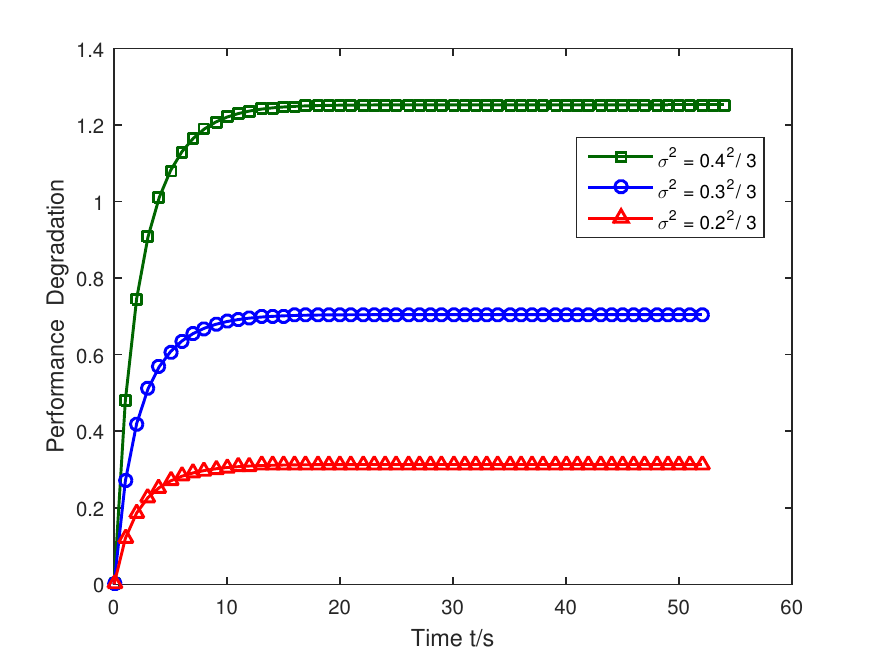}}
\caption{Performance degradation of formation convergence with random inputs.}
\label{fig5}
\end{center}
\vspace*{-5pt}
\end{figure}

\section{Conclusion}\label{secvi}
This paper investigates unpredictable control for the linear system. We quantify the unpredictability of the system with variance and probability metrics respectively. With variance metric, we prove that the unpredictability is proportional to the trace of covariance matrix. With probability metric, we prove that the uniform distribution of unpredictable control is the optimal among all continuous distribution when the confidence interval is small. We show how to calculate an optimal distribution with a novel numerical method. The solved optimal distribution outperforms Gaussian and Laplace distributions. We combine unpredictable control with LQR and cooperative control and demonstrate the performance by simulations. 
{Note that the current unpredictable design  only considers one step. In the future research we plan to extend it to the whole control horizon and achieve a trade-off between control performance and unpredictability over
the horizon. The proposed algorithm can be applied to mobile agents to preserve the security and safety from 
the external attacker who observes and predicts their trajectories.}


\appendix
\section{Proof of Theorem \ref{thm1}}\label{app0}
First, we consider $\bm{\mathrm{C}_1}$. The optimal distribution $f^{*}_{\theta}(z)$ is obtained by
solving $\bm{\mathrm {P_1}}$ under condition $\hat x(k)=x(k)$. Then, it follows that
\begin{align*}\label{eq-c1}
J_1= &{\mathbb{E}\left[ \left( {y(k+1)-\hat y(k+1)} \right)^{T}\left( {y(k+1)-\hat y(k+1)} \right) \right]} \nonumber \\
  = &\mathbb{E}\left[ \left\|C({Ax(k)+ Bu(k)+\theta_e(k)-A\hat x(k)-B\hat u(k)}) \right\|_{2}^{2} \right] \nonumber\\
  = &\mathbb{E}\left[ \left\|B_{1}\hat u(k) - B_{1}u(k)-\theta(k) \right\|_{2}^{2} \right] \nonumber
\end{align*}
{Denote $B_{1}\hat u(k) - B_{1}u(k)$ as random vector $X=[X_1,X_2,\cdots,X_n]^{T}$ and $-\theta(k)$ as $Z=[Z_1,Z_2,\cdots,Z_n]^{T}$, respectively. Each random variable $X_i$ is independent from random variable $Z_i$, and
$\mathbb{E}(Z_i)=0,i=1,2,\cdots,n$. It is not difficult to prove that $X + Z$ satisfies that
\[\mathbb{E}((X+Z)^{T}(X+Z))=\mathbb{E}(\left\|X\right\|_{2}^{2})+\mathbb{E}(\left\|Z\right\|_{2}^{2}).\]
Then, we have
\begin{align*}
J_1 
  = & \mathbb{E}(\left\|B_{1}\hat u(k) - B_{1}u(k)\right\|_{2}^{2}) + \mathbb{E}(\left\|\theta\right\|_{2}^{2})\\
  \geq & \mathbb{E}(\left\|\theta\right\|_{2}^{2})= \tr(\Sigma)=
  \sum_{i = 1}^{m}\mathbb{D}(\theta_i).
\end{align*}}

When $\hat u(k) = u(k)$, the objective function achieves the minimum value 
\begin{equation}\label{eq-d1}
\min_{\hat u(k)}{J_1}= \tr(\Sigma)=\sum_{i = 1}^{m}\mathbb{D}(\theta_i)\leq\sum_{i = 1}^{m}\sigma_i^2 = \tr(\overline\Sigma).
\end{equation}
According to the equality condition that  \eqref{eq-d1} holds, $f_{\theta}(z)$ is the optimal distribution if it makes $\tr(\Sigma_{ii})$ or $\mathbb{D}(\theta_i)$, $i = 1,2,\dots,m$, maximal in $\bm{\mathrm{C}_1}$.

Second, in $\bm{\mathrm{C}_2}$, $\varepsilon(k)$ is a random variable and the expectation of $\varepsilon(k)$ equals to zero. Since $\varepsilon(k)$ is the posterior estimation error by the attacker which is unknown, $\varepsilon(k)$ is independent with $\theta(k)$. Then, we have
\begin{equation}\label{eq12}
\begin{aligned}
\min_{\hat u(k)}J_1=\sum_{i = 1}^{m}\mathbb{D}(\theta_i)+\mathbb{E}\left[\|CA\varepsilon(k)\|_2^{2}\right].
\end{aligned}
\end{equation}

{{Since $\hat u(k)$ and $\mathbb{E}\left[\|CA\varepsilon(k)\|_2^{2}\right]$ are independent with each other, we have proved that $f_{\theta}(z)$ is the optimal distribution iff it makes each $\mathbb{D}(\theta_i)$ maximal.}}

\section{Proof of Theorem \ref{thm2}}\label{app1}
We first prove that when covariance matrix $\Sigma$ is fixed, under $\bm{\mathrm{C}_1}$, $\exists\ \alpha\in (0, \sqrt{3}\min \limits_{i}{\sigma_i}]$,  
the optimal distribution $f^*_{\theta}(z)$ is the multivariate uniform distribution. For uniform distributions, it is obvious that the larger input variance is, the smaller $J_2$ is. 

{By contradiction, we assume that $\forall~ \alpha < \sqrt{3}\min \limits_{i}{\sigma_i}$, there exists at least one of optimal $f^{*}_{\theta}(z)$ is not uniform distribution. According to definition \ref{def5}, we have
\begin{equation}\label{proof2-1}
\max_{\hat u(k)}\int_{\Omega}f^{U}_{\theta}(z)\,\mathrm{d} z \geq
\max_{\hat u(k)}\int_{\Omega}f^{*}_{\theta}(z)\,\mathrm{d} z,
\end{equation}
where $f^{U}_{\theta}(\cdot)$ represents a uniform distribution on $\Omega_U$.
{With $\alpha < \sqrt{3}\min \limits_{i}{\sigma_i}$}, $\hat u(k)$ makes $\int_{\Omega}f_{\theta}(z)\,\mathrm{d} z$ maximum such that $\Omega\subset \Omega_U$, where 
$\Omega_U=\{(z_1,\cdots,z_m)|
-\sqrt 3\sigma_{i} \leq z_i \leq \sqrt 3\sigma_{i}, i = 1, \cdots, m\}$.
}

Since Eq.\eqref{proof2-1} holds for arbitrary small $\alpha$, according to the continuity of $f_{\theta}(z)$, we have
\begin{equation}
\max_{z\in \Omega_U}f^{U}_{\theta}(z)\geq
\max_{z\in \Omega_U}f^{*}_{\theta}(z).
\end{equation}
{This shows that the maximum value in $\Omega_{U}$ of $f^{U}_{\theta}(z)$ is larger than $f^{*}_{\theta}(z)$. Consider the property of the uniform distribution. There is
\begin{equation}
f^{U}_{\theta}(z) \geq f^{*}_{\theta}(z),\,\forall z\in\Omega_U.
\end{equation}
}

Since variances of $f_{\theta}^{U}(z)$ and $f^{*}_{\theta}(z)$ are the same, it  means
\begin{equation}\label{thm3_1}
\begin{aligned}
&\int_{\Omega_U}f_{\theta}^{U}(z)z^{2}\,\mathrm{d} z =
\int_{\mathbb{R}^{m}}f^{*}_{\theta}(z)z^{2}\,\mathrm{d} z.
\end{aligned}
\end{equation}

When $z\in \mathbb{R}^{m}\setminus\Omega_U$, we have for $i=1,\cdots,m$,~$z_i^{2}> \sqrt{3}\sigma_i^{2}$.
{Then, it directly follows that
\begin{equation}\label{thm3_2}
\begin{aligned}
 &\int_{\Omega_U}(f_{\theta}^{U}(z)-f^{*}_{\theta}(z))\sum_{\i = 1}^{m}z_i^2\,\mathrm{d} z
=\int_{\mathbb{R}^{m}\setminus\Omega_U}f^{*}_{\theta}(z)\sum_{i = 1}^{m}z_i^2\,\mathrm{d} z\\
> & \sum_{i = 1}^{m}(\sqrt{3}\sigma_i)^{2}\int_{\mathbb{R}^{m}\setminus\Omega_U}
f^{*}_{\theta}(z)\,\mathrm{d} z.
\end{aligned}
\end{equation}}
Meanwhile, note that $~\sum_{i = 1}^{m}z_i^{2}\leq \sum_{i = 1}^{m}
(\sqrt{3}\sigma_i)^{2}, \forall~z\in \Omega_U$, thus it holds that
\begin{equation}\label{thm3_3}
\begin{aligned}
&\int_{\Omega_U}(f_{\theta}^{U}(z)-f^{*}_{\theta}(z))\sum_{i = 1}^{m}z_i^2
\,\mathrm{d} z\\
<& \sum_{i = 1}^{m}(\sqrt{3}\sigma_i)^{2}
\int_{\Omega_U}(f^{U}_{\theta}(z)-f^{*}_{\theta}(z)) \,\mathrm{d} z\\
=& \sum_{i = 1}^{m}(\sqrt{3}\sigma_i)^{2}
(1-\int_{\Omega_U}f^{*}_{\theta_m}(z)\,\mathrm{d} z)\\
=& \sum_{i = 1}^{m}(\sqrt{3}\sigma_i)^{2}
\int_{\mathbb{R}^{m}\setminus\Omega_U}f^{*}_{\theta}(z)\,\mathrm{d} z.
\end{aligned}
\end{equation}
Eqs. \eqref{thm3_2} and \eqref{thm3_3} renders a contradiction. 

Hence, one infers 
\[f^{*}_{\theta}(z)=f_{\theta}^{U}(z).\]

Therefore, one concludes that all the optimal distributions of $\theta$ should follow the uniform distribution.  We thus have completed the proof.

\section{Proof of Theorem \ref{thm3}}\label{app2}
We prove this theorem by the contradiction. 

Suppose $(f^{*}_{\theta}(z), 0)$ is not optimal and we represent one of optimal solutions as $(f^{*}_{\theta}(z), \delta)$, where $\delta \neq 0_m$. According to definitions in Section \ref{secii}, we have
\[\Omega_0=\left\{\theta \Big|\|\theta\| \leq \alpha\right\}\subset \Omega_1 = \left\{\theta \Big|\|\theta\| \leq a\right\}.\]
Let\[\Omega_{\delta} =\left\{\theta \Big| \|\theta - \delta\| \leq \alpha\right\}, ~\Omega_{-\delta} =\left\{\theta\Big|\|\theta + \delta\| \leq \alpha\right\} ,\]
and $\Omega_{\delta},~\Omega_{-\delta}\subset \Omega_1$.
Then, we have
\begin{equation}\label{eq:omega_delta_cmp}
\begin{aligned}
\int_{\Omega_0} f^{*}_{\theta}(z)\ \mathrm{d}{z} < \int_{\Omega_{\delta}} f^{*}_{\theta}(z)\ \mathrm{d}{z}.
\end{aligned}
\end{equation}

Define $f^{*,2}_{\theta}(z)$ as
\begin{equation}\label{proof_thm3.5_1}
f^{*,2}_{\theta}(z)=\left\{
\begin{aligned}
& \frac{1}{2}(f^{*}_{\theta}(y+\delta) + f^{*}_{\theta}(y-\delta)),~y \in \Omega_0, \\
& \frac{1}{2}(f^{*}_{\theta}(y-\delta) + f^{*}_{\theta}(z)),~y \in \Omega_{\delta}, \\
& \frac{1}{2}(f^{*}_{\theta}(y+\delta) + f^{*}_{\theta}(z)),~y \in \Omega_{-\delta}.
\end{aligned}
\right.
\end{equation}
One infers $\int_{\Omega_1}f^{*,2}_{\theta}(z)\ \mathrm{d}{z} = 1$, which means $f^{*,2}_{\theta}(z)$ is a PDF.

Next, we will prove that $f^{*,2}_{\theta}(z)$ satisfies the condition of the PDF of $\theta$. For condition i), we have $f^{*,2}_{\theta}(z) = f^{*,2}_{\theta}(-z)$. For condition ii), we denote the new variances of components as
$\mathbb{D}^{+}(\theta_i)$. It follows that for $i = 1, 2, \cdots, m$,
\begin{equation}
\begin{aligned}
\mathbb{D}^{+}(\theta_i)& - \mathbb{D}(\theta_i) = 
\int_{\Omega_0}\frac{1}{2}(f^{*}_{\theta}(y + \delta) + f^{*}_{\theta}(y - \delta))z_i^2~\mathrm{d}{z}\\
&+ \int_{\Omega_{\delta}}(f^{*}_{\theta}(z)+ f^{*}_{\theta}(y-\delta)) z_i^2~ \mathrm{d}{z} \\
& - \int_{\Omega_0}f^{*}_{\theta}(z)z_i^2~\mathrm{d}{z} - 2\int_{\Omega_{\delta}}f^{*}_{\theta}(z)z_i^2~\mathrm{d}{z}\\
&=  \int_{\Omega_0}f^{*}_{\theta}(y+\delta)z_i^2~\mathrm{d}{z}+\int_{\Omega_{\delta}}
f^{*}_{\theta}(y-\delta)z_i^2~\mathrm{d}{z} \\
& - \int_{\Omega_0} f^{*}_{\theta}(z)z_i^2~\mathrm{d}{z} - \int_{\Omega_\delta}f^{*}(z)_{\theta}z_i^2~\mathrm{d}{z}\\
& = \delta_i^2\left(\int_{\Omega_0}f^{*}_{\theta}(z)~\mathrm{d}{z}-\int_{\Omega_\delta}f^{*}_{\theta}(z)~\mathrm{d}{z}
\right).
\end{aligned}
\end{equation}
According to ~\eqref{eq:omega_delta_cmp}, we have 
\[\mathbb{D}^{+}(\theta_i)\leq \mathbb{D}(\theta_i).\] Thus, $f^{*,2}_{\theta}(z)$ satisfies condition ii). One concludes that $f^{*,2}_{\theta}(z)$ is the PDF of $\theta$. It follows that 
\[\int_{\Omega_0}f^{*,2}_{\theta}(z) \ \mathrm{d}{z} = 
\int_{\Omega_{\delta}}f^{*}_{\theta}(z) \ \mathrm{d}{z}.\]
If the following equation holds true, 
\[u^{*}(k) = \arg \max_{\hat u(k)}{\int_{\Omega}f^{*, 2}_{\theta}(z)} = 0,\]
then we have $(f^{*,2}_{\theta}(z), 0)$ is an optimal solution.

Suppose that $\delta_{2} = \arg \max_{\hat u(k)}{\int_{\Omega}f^{*, 2}_{\theta}(z)}\neq 0$. Then, it is clear that 
\[\int_{\Omega_{\delta_2}}f^{*,2}_{\theta}(z) \ \mathrm{d}{z} > 
\int_{\Omega_{\delta}}f^{*}_{\theta}(z) \ \mathrm{d}{z}.\]
The same conclusion with equation \eqref{proof_thm3.5_1}. We define $f^{*,3}_{\theta}(z)$ and 
we achieve 
\begin{equation}
\begin{aligned}
f^{*,n}_{\theta}(z) = {\lambda}^{n-1}\int_{\Omega_{\delta}}f^{*}_{\theta}(z) \ \mathrm{d}{z},~\lambda > 1,
\end{aligned}
\end{equation}
by iterations. When $n > 1 - \log_{\lambda}(\int_{\Omega_{\delta}}f^{*}_{\theta}(z)\mathrm{d}{z})$,  it holds that \[\int_{\Omega_{\delta_n}}f^{*}_{\theta}(z) \mathrm{d}{z} > 1,\]
which renders a contradiction.
Hence, $(f^{*,2}_{\theta}(z), 0)$ is an optimal solution, and we have completed the proof.

\section{Proof of Theorem \ref{thm4}}\label{app4}
Let the PDF of $\varepsilon_y(k)=CA(x(k)-\hat x(k))$ be $f_{\varepsilon}(z)$. According to the definition in Section \ref{secii}, we have
\[\Omega=\left\{\theta\Big|
\|\theta-\widetilde u\| \leq \alpha\right\}.\]Let set $\Omega_{\varepsilon}$ be
\[\Omega_{\varepsilon}\triangleq\left\{[\theta^T, \varepsilon^T]^{T}\in \mathbb{R}^{2m}\Big |\|
\theta_i-\widetilde{u}_i + \varepsilon_i\| \leq \alpha\right\}.\]

For arbitrary $\alpha >0$ and an optimal prediction $\hat u_{1}(k)$ for arbitrary $\hat x(k)\neq x(k)$, it follows that
\begin{align*}
&J_2(f_{\theta}(z),\hat u_{1}(k),\hat x(k),\alpha)\\
=&~\mathbb{P}\left(\| \varepsilon_y(k)+ \theta(k)+ B_{1}u(k) - B_{1}\hat u(k)
\|\leq \alpha\right)\\
=&~\int_{\mathbb{R}^{m}}f_{\varepsilon}(z)\left
(\int_{\Omega_{\varepsilon}}{f_{\theta}(z) \,\mathrm{d} z}\right)\,\mathrm{d}z\\
\leq&~\max_{\hat u(k)} \int_{\Omega}{f_{\theta}}(z)\,\mathrm{d} z \cdot \int_{\mathbb{R}^{m}}{f_{\varepsilon}(z)\,\mathrm{d}z}\\
=&~\max_{\hat u(k)} \int_{\Omega}{f_{\theta}}(z)\,\mathrm{d} z =J_2(f_{\theta}(z),\hat u^{*}_{2}(k),\hat x^{*}(k),\alpha).
\end{align*}
The equations above also holds for $\hat u_{1}^{*}(k)$. Thus, we have completed the proof.

\bibliographystyle{unsrt} 
\bibliography{ref}

\textbf{Chendi Qu} received the B.E. degree in the Department of Automation from Tsinghua University, Beijing, China, in 2021. 
She is currently working toward the Ph.D. degree with the Department of Automation, Shanghai Jiao Tong University, Shanghai, China. 
She is a member of Intelligent Wireless Networks and Cooperative Control group. 
Her research interests include robotics, security of cyber-physical system, and distributed optimization and learning in multi-agent networks. 

\textbf{Jianping He} 
(SM'19) is currently an associate professor in the Department of Automation at Shanghai Jiao Tong University. He received the Ph.D. degree in control science and engineering from Zhejiang University, Hangzhou, China, in 2013, and had been a research fellow in the Department of Electrical and Computer Engineering at University of Victoria, Canada, from Dec. 2013 to Mar. 2017. His research interests mainly include the distributed learning, control and optimization, security and privacy in network systems.

Dr. He serves as an Associate Editor for IEEE Tran. Control of Network Systems, IEEE Open Journal of Vehicular Technology, and KSII Trans. Internet and Information Systems. He was also a Guest Editor of IEEE TAC, International Journal of Robust and Nonlinear Control, etc. He was the winner of Outstanding Thesis Award, Chinese Association of Automation, 2015. He received the best paper award from IEEE WCSP'17, the best conference paper award from IEEE PESGM'17, the finalist for the best student paper award from IEEE ICCA'17, and the finalist best conference paper award from IEEE VTC20-FALL.

\textbf{Jialun Li} 
(S'19) received the B.E. degree in the School of Astronautics from Harbin Institue of Technology, Harbin, China, in 2019. He is currently working toward the M.S. degree with the Department of Automation, Shanghai Jiaotong University, Shanghai, China. 
He is a member of Intelligent of Wireless Networking and Cooperative Control group. 
His research interests include estimation theory and robotics.

\textbf{Xiaoming Duan} 
obtained his B.E. degree in Automation from the Beijing Institute of Technology in 2013, his Master’s Degree in Control Science and Engineering from Zhejiang University in 2016, and his PhD degree in Mechanical Engineering from the University of California at Santa Barbara in 2020. He is currently an assistant professor in the Department of Automation, Shanghai Jiao Tong University.

\end{document}